\tikzstyle heightone=[scale=.7,shift={(0,-.3)}]
\tikzstyle heightones=[scale=.8,xscale=.35,shift={(0,.1)}]
\tikzstyle heightoneonehalf=[scale=.9,shift={(0,-.2)}]
\tikzstyle heighttwo=[scale=.9,shift={(0,-.4)}]
\tikzstyle heighttwos=[scale=.5,xscale=.6,shift={(0,-.1)}]
\tikzstyle heightthree=[scale=.6,shift={(0,-.9)}]
\tikzstyle heightthrees=[scale=.4,xscale=.7,shift={(0,-.2)}]
\tikzstyle arrowstyle=[blue,semitransparent,scale=2]
\tikzstyle basiclabel=[draw=none,fill=none,shape=rectangle,inner sep=2pt,scale=.8]
\tikzstyle leftlabel=[basiclabel,anchor=east]
\tikzstyle rightlabel=[basiclabel,anchor=west]
\tikzstyle bottomlabel=[basiclabel,anchor=north]
\tikzstyle toplabel=[basiclabel,anchor=south]
\tikzstyle vertex=[circle,draw,fill=black,inner sep=1pt]
\tikzstyle ciliation=[circle,draw=none,fill=red,inner sep=1pt,semitransparent]
\tikzstyle ciliatednode=[vertex,pin={[pin distance=1mm,pin edge={semitransparent,red},ciliation]#1:{}}]
\tikzstyle vector=[black,thick,rectangle,draw=gray!50!yellow,top color=yellow!30,bottom color=black!10,scale=.8,inner sep=2pt]
\tikzstyle small vector=[vector,scale=.8]
\tikzstyle plain vector=[rectangle,draw=none,fill=white,scale=.7]
\tikzstyle my signal=[black,thick,signal,signal pointer angle=120,draw=blue!50,top color=blue!20,bottom color=black!10,scale=.8,inner sep=2pt]
\tikzstyle matrix=[my signal,signal from=south,signal to=north]
\tikzstyle reverse matrix=[my signal,signal from=north,signal to=south]
\tikzstyle small matrix=[matrix,scale=.7]
\tikzstyle reverse small matrix=[reverse matrix,scale=.7]
\tikzstyle matrix on edge=[small matrix,sloped,rotate=-90]
\tikzstyle reverse matrix on edge=[small matrix,sloped,rotate=90]
\tikzstyle trivalent=[very thick]
\tikzstyle dotdotdot=[decorate,decoration={markings,
    mark=at position .3 with{\node{.};},
    mark=at position .5 with {\node{.};},
    mark=at position .7 with {\node{.};}}]
\tikzstyle wavyup=[out=90,in=-90]
\tikzstyle wavydown=[out=-90,in=90]
\tikzstyle symmetrizer=[rectangle,fill=gray!10,draw=black]
\tikzstyle permutation=[symmetrizer]
\tikzstyle antisymmetrizer=[rectangle,fill=black,draw=black]
\tikzstyle symlabel=[draw=none,fill=none,black,scale=.8]
\tikzstyle asymlabel=[draw=none,fill=none,white,scale=.8]
\newcommand{\BC}{{\mathbb {C}}}
\newcommand{\BE}{{\mathbb {E}}}
\newcommand{\BN}{{\mathbb {N}}}
\newcommand{\BP}{{\mathbb {P}}}
\newcommand{\BR}{{\mathbb {R}}}
\newcommand{\BS}{{\mathbb {S}}}
\newcommand{\Sym}{\mathrm{Sym}}
\newcommand{\polylog}{\mathrm{polylog}}
\newcommand{\CA}{{\mathcal {A}}}
\newcommand{\CD}{{\mathcal {D}}}
\newcommand{\CH}{{\mathcal {H}}}
\newcommand{\CM}{{\mathcal {M}}}
\newcommand{\CN}{{\mathcal {N}}}
\newcommand{\CO}{{\mathcal {O}}}
\newcommand{\CS}{{\mathcal {S}}}
\newcommand{\rank}{{\mathrm{rank}}}
\newcommand{\sgn}{{\mathrm{sgn}}}
\newcommand{\tr}{{\mathrm{tr}}}
\newcommand{\id}{{\mathrm{id}}}
\newcommand{\dist}{{\mathrm{dist}}}
\newtheorem{theorem}{Theorem}
\newtheorem{proposition}[theorem]{Proposition}
\newtheorem{lemma}[theorem]{Lemma}
\theoremstyle{definition}
\newtheorem{definition}[theorem]{Definition}
\newtheorem{remark}[theorem]{Remark}
\begin{document}

\title{Low rank matrix recovery from rank one measurements}

\author{Richard Kueng}
\address{Institute for Physics, University of Freiburg, Rheinstra{\ss}e 10, 79104 Freiburg, Germany}
\curraddr{}
\email{richard.kueng@physik.uni-freiburg.de}
\thanks{}

\author{Holger Rauhut}
\address{Lehrstuhl C f{\"u}r Mathematik (Analysis), RWTH Aachen University, Pontdriesch 10, 52062 Aachen, Germany}
\curraddr{}
\email{rauhut@mathc.rwth-aachen.de}
\thanks{}

\author{Ulrich Terstiege}
\address{Lehrstuhl C f{\"u}r Mathematik (Analysis), RWTH Aachen University, Pontdriesch 10, 52062 Aachen, Germany}
\curraddr{}
\email{terstiege@mathc.rwth-aachen.de}
\thanks{}


\date{October 25, 2014}
\maketitle

\begin{abstract}
We study the recovery of Hermitian low rank 
matrices $X \in \mathbb{C}^{n \times n}$ 
from undersampled measurements via nuclear norm minimization. We consider
the particular scenario where the measurements are Frobenius inner products with random rank-one matrices of the form $a_j a_j^*$ for some
measurement vectors $a_1,\hdots,a_m$, i.e., the measurements are given by $y_j = \tr(X a_j a_j^*)$.
The case where the matrix $X=x x^*$ to be recovered is of rank one reduces to the problem of phaseless estimation 
(from measurements, $y_j = |\langle x,a_j\rangle|^2$ via the PhaseLift approach, 
which has been introduced recently.
We derive bounds for the number $m$ of measurements that guarantee successful uniform recovery of Hermitian rank $r$ matrices, 
either for the vectors $a_j$, $j=1,\hdots,m$, being chosen
independently at random according to a standard Gaussian distribution, or $a_j$ being sampled independently from an (approximate) complex projective
$t$-design with $t=4$. In the Gaussian case, we require $m \geq C r n$ measurements, while in the case of $4$-designs we need $m \geq Cr n \log(n)$. 
Our results are uniform in the sense that one random choice of the measurement vectors $a_j$ guarantees recovery of all rank $r$-matrices simultaneously with high probability. Moreover, we prove robustness of recovery under perturbation of the measurements by noise.
The result for approximate $4$-designs generalizes and improves a recent bound on phase retrieval due to Gross, Kueng and Krahmer.
In addition, it has applications in quantum state tomography. 
Our proofs employ the so-called bowling scheme which is based on recent ideas by Mendelson and Koltchinskii.
\end{abstract}

\section{Introduction}

\subsection{The phase retrieval problem}

The problem of retrieving a complex signal from measurements that are ignorant towards phases is abundant in many different areas of science,
such as X-ray cristallography \cite{ha93,mi90}, astronomy \cite{fi87} diffraction imaging \cite{chcoelsesh14,mi90} and more \cite{bacaed06-1,bu07,wa63}.
Mathematically formulated, the problem consists of recovering a complex signal (vector) $x \in \BC^n$ from measurements of the form
\begin{equation}
| \langle a_j, x \rangle |^2 = b_j \quad \textrm{for} \quad j=1,\ldots,m, \label{eq:measurements}
\end{equation}
where $a_1,\ldots,a_m \in \BC^n$ are sampling vectors. 
This ill-posed inverse problem is called \emph{phase retrieval} and has attracted considerable interest over the last few decades. 
An important feature of this problem is that the signal $x$ enters the measurement process (\ref{eq:measurements}) quadratically.
This leads to a non-linear inverse problem. Classical approaches to numerically solving it include alternating projection methods 
\cite{fi82-1,gesa72}. However, these methods usually require extra constraints and careful selection of parameters, and in particular, no rigorous
convergence or recovery guarantees seem to be available.

As Balan et al.\ pointed out in \cite{balan_painless_2009}, this apparent obstacle of having nonlinear measurements 
can be overcome by noting that the measurement process -- while quadratic in $x$ -- is linear in the outer product $x x^*$:
\begin{equation*}
| \langle a_j, x \rangle |^2 = \tr \left( a_j a_j^* x x^* \right).
\end{equation*}
This ``lifts'' the problem to a matrix space of dimension $n^2$, where it becomes linear and can be solved explicitly, provided that the number of measurements $m$ is at least $n^2$ \cite{balan_painless_2009}.
However, there is additional structure present, namely the  matrix $X = x x^*$ is guaranteed to have rank one. 
This connects the phase retrieval problem to the young but already extensive field of \emph{low-rank matrix recovery}. 
Indeed, it is just a special case of low-rank matrix recovery, where both the signal $X = x x^*$ and the measurement matrices $A_j = a_j a_j^*$ are constrained to be proportional to rank-one projectors.

It should be noted, however, that such a reduction to a low rank matrix recovery problem is just one possibility to retrieve phases. 
Other approaches use polarization identities \cite{alexeev_phase_2012} or alternate projections \cite{NJS13}. Yet another recent method is phase retrieval via Wirtinger flow \cite{caliso14}. 

\subsection{Low rank matrix recovery}
\label{sec:lowrankrec}
Building on ideas of compressive sensing \cite{cata06,do06-2,fora13}, low rank matrix recovery aims to reconstruct a matrix of low rank from incomplete
linear measurements via efficient algorithms \cite{recht_guaranteed_2010}. 
For our purposes we concentrate on Hermitian matrices $X \in \BC^{n \times n}$
and consider measurements of the form 
\begin{equation}
\tr \left( X A_j \right) = b_j \quad j=1,\ldots,m \label{eq:matrix_measurements}
\end{equation} 
where the $A_j \in \BC^{n \times n}$ are some Hermitian matrices.
For notational simplicity, we define the measurement operator
\begin{equation*}
\CA: \CH_n \to \BR^m \quad Z \mapsto \sum_{j=1}^m \tr \left( Z A_j \right) e_j,
\end{equation*}
where $e_1,\ldots,e_m$ denotes the standard basis in $\BR^m$.
This summarizes an entire (possibly noisy) measurement process via 
\begin{equation}
 b=\CA (X)+\epsilon. \label{eq:measurement_process}
\end{equation} Here $b = (b_1,\ldots,b_m)^T$ contains all measurement outcomes and $\epsilon \in \BR^m$ denotes additive noise. 
Low rank matrix recovery can be regarded as a non-commutative version of compressive sensing. 
Indeed, the structural assumption of low rank assures that the matrix is sparse in its eigenbasis. 
In parallel to the prominent role of $\ell_1$-norm minimization in compressive sensing \cite{fora13}, 
it is by now well-appreciated \cite{ahrero12,candes_exact_2009,capl11,recht_guaranteed_2010,gross_recovering_2011}
that in many relevant measurement scenarios, the sought for matrix $X$ can be efficiently recovered
via convex programming,
although the corresponding rank minimization problem is NP hard in general \cite{fa02-2}.

In order to formulate this convex program, we introduce the standard $\ell_p$-norm on $\BR^n$ or $\BC^n$ by
$\| x \|_{\ell_p} = (\sum_{\ell=1}^n |x_\ell |^p)^{1/p}$ for $1 \leq p < \infty$ and the Schatten-$p$-norm on the space $\CH_n$ of 
Hermitian $n\times n$ matrices as
\[
\|Z\|_p = \left( \sum_{\ell=1}^n \sigma_\ell(Z)^p \right)^{1/p} = \tr \left( |Z|^p \right)^{1/p}, \quad p\geq 1,
\]
where $\sigma_\ell(Z)$, $\ell=1,\hdots,n$, denote the singular values of $Z$, $\tr$ is the trace and $|Z| = (Z^*Z)^{1/2}$. Important special cases
are the nuclear norm $\|Z\|_* = \|Z\|_1$, the Frobenius norm $\|Z\|_F = \|Z\|_2$ and the spectral norm $\|Z\|_\infty = \|Z\|_{2 \to 2} = \sigma_{\max}(Z)$ being
the largest singular value. More information, concerning Schatten-$p$ norms can be found in Appendix~\ref{sec:Schatten}.
 
Assuming the upper bound $\| \epsilon \|_{\ell_2} \leq \eta$ on the noise for some $\eta \geq 0$, recovery via nuclear norm minimization corresponds to
\begin{equation}
\underset{Z \in \CH_n}{\textrm{minimize }}  \| Z \|_1  \
\textrm{subject to }  \| \CA (Z ) - b \|_{\ell_2} \leq \eta. \label{eq:convex_program}
\end{equation}
This is a convex optimization problem which can be solved computationally efficiently with various strategies \cite[Chapter 15]{fora13}, \cite{bova04,cope11,pabo13,toyu10}. We note that several alternatives to nuclear norm minimization may also be applied including
iteratively reweighted least squares \cite{forawa11}, iterative hard thresholding \cite{ceky14,tawe13}, greedy approaches \cite{brle10} and algorithms
specialized to certain measurement maps $\CA$ \cite{kemooh09}, but our analysis
is geared towards nuclear norm minimization and does not provide guarantees for these other algorithms.

Up to date, a number of measurement instances 
have been identified for which nuclear norm minimization (\ref{eq:convex_program}) -- and potentially other algorithms --
provably recovers the sought 
for low-rank matrix from considerably fewer than $n^2$ measurements 
\cite{candes_exact_2009,capl11,chparewi10,gross_recovering_2011,forawa11,liu_universal_2011,recht_guaranteed_2010,tr14}. 
All these constructions are based on randomness, the simplest being a random Gaussian measurement map where
all entries $\CA_{j,k,\ell}$ in the representation $\CA(X)_j = \sum_{k,\ell=1}^n \CA_{j,k,\ell} X_{k,\ell}$ are independent mean zero
variance one Gaussian random variables. It is shown in \cite{capl11,recht_guaranteed_2010} that 
\[
m \geq C r n
\]
measurements suffice in order to (stably) reconstruct a matrix $X \in \BC^{n \times n}$ of rank at most $r$ with probability at least $1-\exp(-c m)$, where
the constants $C,c > 0$ are universal.
This result is based on a version of the by-now classical restricted isometry property so that this result is uniform in the sense
that a random draw of $\CA$ enables reconstruction of {\em all} rank $r$ matrices simultaneously with high probability. A corresponding nonuniform
result, holding only for a fixed rank $r$ matrix $X$ is stated in \cite{chparewi10}, see also \cite{amlomctr13,tr14}, which shows that essentially $m > 6 r n$
measurements are sufficient, thus providing also good constants.

While unstructured Gaussian measurements provide optimal guarantees, which are comparably easy to derive, many applications demand for more structure in the measurement
process. A particular instance is the matrix completion problem \cite{bhchsawa13,candes_exact_2009,cata10-2,gross_recovering_2011,chen_incoherence_2013}, which aims at recovering missing entries
of a matrix which is known to be of low rank. Here, the source of randomness is in the selection of the known entries. In contrast to the unstructured measurements, additional
incoherence properties of the matrix to be recovered are required and the bounds on the number of measurements are slightly worse \cite{bhchsawa13,gross_recovering_2011}, 
namely $m \geq Cr n \log^2(n)$. The matrix completion setup generalizes to measurements with respect to an arbitrary operator basis. The incoherence assumption
on the matrix to be recovered can be dropped if in turn the operator basis is incoherent, which is the case for the particular example of Pauli measurements
arising in quantum tomography \cite{gross_recovering_2011,liu_universal_2011}. Here, a sufficient and necessary number of measurements scales like $m \geq Crn \log(n)$. 

Rank-one measurements, however, in general fail to be sufficiently incoherent for directly applying proof techniques of the same type.
For the particular case of phase retrieval (where the matrix of interest is by construction a rank-one projector) this obstacle could be overcome 
by providing problem specific recovery guarantees that either manifestly rely on (rank one) Gaussian measurements \cite{candes_solving_2012,tr14} 
or result in a non-optimal sampling rate \cite{grkrku13,candes_masked_2013,gross_improved_2014}.

\subsection{Weighted complex projective designs} 

The concept of real spherical designs was introduced  by Delsarte Goethals and Seidel in a seminal paper \cite{delsarte_spherical_1977} and has been studied in algebraic combinatorics \cite{sidelnikov_spherical_1999} and coding theory \cite{delsarte_spherical_1977,nebe_invariants_2001}.
Recently, complex projective designs -- the natural extension of real spherical designs to the complex unit sphere --
have been of considerable interest in quantum information theory 
\cite{zauner_quantendesigns_1999,scott_tight_2006,hayashi_reexamination_2005, gross_evenly_2007,low_large_2009,brandao_local_2012,lancien_distinguishing_2013}. 

Roughly speaking, a complex projective $t$-design is a finite subset of the complex unit sphere in $\BC^n$ with the particular property that the discrete average of any polynomial of degree $(t,t)$ (i.e., a polynomial $p(z,\bar{z})$ of total degree $t$ both in $z=(z_1,\hdots, z_n)$ and in $\bar{z}=(\bar z_1,\hdots,\bar z_n)$) or less equals its uniform average. 
Many equivalent definitions capture this essence, but the following one best serves our purpose.

\begin{definition}[\emph{exact, weighted $t$-design}, Definition 3 in \cite{scott_tight_2006}] \label{def:exact_design}
For $t \in \BN$, 
a finite set $\left\{w_1,\ldots,w_N \right\} \subset \BC^n$ of normalized vectors with corresponding weights $\left\{p_1,\ldots,p_N \right\}$ such that
$p_i \geq 0$ and $\sum_{i=1}^N p_i = 1$ is called a \emph{weighted complex projective $t$-design} of dimension $n$ and cardinality $N$  if 
\begin{equation}
\sum_{i=1}^N p_i \left( w_i w_i^* \right)^{\otimes t} = \int_{\BC P^{n-1}} \left( w w^* \right)^{\otimes t} \mathrm{d} w,
\label{eq:exact_designs}
\end{equation}
where the integral on the right hand side is taken with respect to the unique unitarily-invariant probability measure on the complex 
projective space $\BC P^{n-1}$ and the integrand is computed using arbitrary preimages of  the $w\in \BC P^{n-1}$ in the unit sphere in $\BC^n$. (Note that if $w_1$ and $w_2$ are elements of the unit sphere that have the same image $w$ in $ \BC P^{n-1}$ then $w_1w_1^*=w_2w_2^*$.) 
This definition in particular shows that uniform sampling from a $t$-design mimics the first $2t$ moments of sampling uniformly according to the Haar measure,
which is equivalent to sampling standard Gaussian vectors followed by renormalization.

A simple application of Schur's Lemma -- see e.g. \cite[Lemma 1]{scott_tight_2006} -- reveals that the integral on the right hand side of (\ref{eq:exact_designs}) amounts to
\begin{equation}
\int_{\BC P^{n-1}} \left( w w^* \right)^{\otimes t} \mathrm{d}w = \binom{n+t-1}{t}^{-1} P_{\Sym^t}, \label{eq:PSym_Haar}
\end{equation}
where $P_{\Sym^t}$ denotes the projector onto the totally symmetric subspace $\Sym^t$ of $\left( \BC^n \right)^{\otimes t}$ defined in the appendix --
see equation (\ref{eq:symmetrizer}). 
\end{definition}
{In accordance with \cite{matthews_distinguishability_2009}, we call a $t$-design \emph{proper}, if all the weights are equal, i.e., $p_i = 1/N$ for all $i=1,\ldots,N$.

Although exact, proper $t$-designs exist and can be constructed in any dimension $n$ for any $t \in \BN$ \cite{seymour_averaging_1984,bajnok1992construction,korevaar_chebyshev_1994,hayashi_reexamination_2005}, these constructions are typically inefficient in the sense that they require vector sets of exponential size.
For example, the construction in \cite{hayashi_reexamination_2005} requires on the order of $\CO \left( t \right)^n$ vectors which scales exponentially 
in the dimension $n$. Constructions of \emph{exact, proper} designs with significantly smaller number of vectors (scaling only polynomially in $n$)
are notoriously difficult to find.

By introducing weights, it becomes simpler to obtain designs with a number of elements that scales polynomially in the dimension $n$.
Some existence results can be found in \cite{hapa05}, where weighted $t$-designs appear under the notion of cubatures of strength $t$.
It seems that one can construct weighted $t$-designs by drawing sufficiently many vectors at random and afterwards solving a linear system for
the weights. Further note, that generalizations of cubatures to higher dimensional projections were used in \cite{bael12} in the context of a generalized phase retrieval
problem, where the measurements are given as norms of projections onto higher dimensional subspaces.

\section{Main results}

\subsection{Low rank matrix recovery from rank one Gaussian projections}

Our first main result gives a uniform and stable guarantee for recovering rank-$r$ matrices with $\mathcal{O} (r n)$ rank one measurements that are proportional to projectors onto standard Gaussian random vectors.

\begin{theorem} \label{Th1}
Consider the measurement process described in (\ref{eq:measurement_process}) with measurement matrices $A_j = a_j a_j^*$, where $a_1,\ldots,a_m \in \BC^n$ are  independent standard Gaussian distributed random vectors. 
Furthermore assume that the number of measurements $m$ obeys
\begin{equation*}
m \geq  C_1 n r,
\end{equation*}
for $1 \leq r \leq n$ arbitrary. 
Then with probability at least $1 - \mathrm{e}^{-C_2 m}$ it holds that for any positive semidefinite matrix $X \in \CH_n$ with rank at most $r$, any solution $X^\#$  to the convex optimization problem (\ref{eq:convex_program}) with noisy measurements $b = \CA (X)+\epsilon$, where $\| \epsilon \|_{\ell_2} \leq \eta$,
obeys
\begin{equation}\label{err:bound1}
\| X - X^\# \|_2 \leq \frac{C_3 \eta}{\sqrt{m}}.
\end{equation}
 Here, $C_1,C_2$ and $C_3$ denote universal positive constants. (In particular, for $\eta=0$ one has exact reconstruction.)
\end{theorem}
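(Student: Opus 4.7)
The plan is to prove Theorem~\ref{Th1} via the small-ball / bowling method of Koltchinskii and Mendelson, by showing that the rank-one Gaussian measurement operator $\CA$ satisfies a uniform lower isometry bound on the set of nuclear-norm descent directions at any PSD rank-$r$ matrix.

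The first step is a deterministic reduction. Let $X^\#$ be feasible for (\ref{eq:convex_program}) and set $H = X^\#-X$. The standard tangent-space decomposition at the rank-$r$ matrix $X$, combined with optimality $\|X^\#\|_1\le\|X\|_1$, shows that the tangent-space component $H_T$ has rank at most $2r$ and that $\|H_{T^\perp}\|_1\le\|H_T\|_1$; in particular
\[
\|H\|_1 \ \le\ 2\|H_T\|_1 \ \le\ 2\sqrt{2r}\,\|H_T\|_2 \ \le\ 2\sqrt{2r}\,\|H\|_2.
\]
Hence $H/\|H\|_2$ lies in the \emph{effective tangent set}
\[
T_r\ :=\ \bigl\{Z\in\CH_n:\ \|Z\|_2=1,\ \|Z\|_1\le 2\sqrt{2r}\bigr\},
\]
while feasibility of $X$ and $X^\#$ forces $\|\CA(H)\|_{\ell_2}\le 2\eta$. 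Consequently, the error estimate (\ref{err:bound1}) follows from the uniform lower isometry
\[
\inf_{Z\in T_r}\|\CA(Z)\|_{\ell_2}\ \ge\ c\sqrt{m},
\]
which we must establish with probability $\ge 1-e^{-C_2 m}$ whenever $m\ge C_1 rn$.

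To prove this bound I would invoke the Koltchinskii-Mendelson small-ball inequality: for $\tau,t>0$, with probability at least $1-e^{-2t^2}$,
\[
\inf_{Z\in T_r}\|\CA(Z)\|_{\ell_2}\ \ge\ c_1\tau\sqrt{m}\,Q_{2\tau}(T_r)\ -\ c_2 W_m(T_r)\ -\ c_3\tau t,
\]
where $Q_{2\tau}(T_r)=\inf_{Z\in T_r}\Pr(|a^*Z a|\ge 2\tau)$ and $W_m(T_r)=\BE\sup_{Z\in T_r}\bigl\langle Z,\,m^{-1/2}\sum_{j=1}^m\varepsilon_j a_j a_j^*\bigr\rangle$, with Rademacher signs $\varepsilon_j$ independent of the $a_j$. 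The task then splits into (a) a uniform small-ball lower bound $Q_{2\tau}(T_r)\ge c_0$ and (b) the mean-width upper bound $W_m(T_r)\le C\sqrt{rn}$. For (a), diagonalising $Z=\sum_k\lambda_k u_k u_k^*$ writes $a^*Z a=\sum_k\lambda_k|\langle a,u_k\rangle|^2$ as a weighted sum of i.i.d.\ (scaled) exponentials, so its second moment is of order $\|Z\|_2^2+(\tr Z)^2$ while its fourth moment is bounded by a universal multiple of $(\|Z\|_2^2+(\tr Z)^2)^2$; the Paley-Zygmund inequality then delivers a universal lower bound on $Q_{2\tau}(T_r)$ valid on $\|Z\|_2=1$. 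For (b), the duality between $\|\cdot\|_1$ and $\|\cdot\|_\infty$ gives
\[
W_m(T_r)\ \le\ 2\sqrt{2r}\ \BE\Big\|\tfrac{1}{\sqrt m}\sum_{j=1}^m\varepsilon_j a_j a_j^*\Big\|_\infty,
\]
and the task is reduced to bounding this expected operator norm by $C\sqrt{n}$. With these two estimates and the choice $t\simeq\sqrt m$, the Koltchinskii-Mendelson bound gives the required lower isometry provided $m\ge C_1 rn$ with $C_1$ sufficiently large, with failure probability $e^{-C_2 m}$.

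The main obstacle is precisely the operator-norm estimate in (b). A routine application of matrix Bernstein or noncommutative Khintchine at $p=2\log n$ yields only $\BE\|\sum\varepsilon_j a_j a_j^*\|_\infty\lesssim\sqrt{mn\log n}$, which would degrade the sample complexity to $m\gtrsim rn\log n$ — exactly the suboptimal rate one expects for approximate $4$-designs. Killing the logarithmic factor is the genuinely Gaussian part of the argument and requires sharper control of the quadratic chaos $Z\mapsto\sum_j a_j^*Z a_j$ restricted to $T_r$, via generic chaining or a Gordon/Slepian-type comparison adapted to the nuclear-norm ball; this refined estimate is the technical core of the bowling scheme and the step most likely to carry the weight of the proof.
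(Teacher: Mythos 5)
Your proposal follows essentially the same route as the paper: the Koltchinskii--Mendelson small-ball bound (Theorem~\ref{KMT}), a Paley--Zygmund lower bound on $Q_{2\tau}$ via the identical moment computation $\BE S^2=\|Z\|_2^2+\tr(Z)^2$ and $\BE S^4\leq C(\BE S^2)^2$, and trace duality to reduce $W_m$ to $\sqrt{r}$ times an expected operator norm. The one structural difference is the deterministic reduction: you use the classical tangent-space cone condition ($\|H_{T^\perp}\|_1\leq\|H_T\|_1$, $\mathrm{rank}(H_T)\leq 2r$) to place the normalized error in $\{\|Z\|_2=1,\ \|Z\|_1\leq 2\sqrt{2r}\}$, whereas the paper works with the union of descent cones $E_r$ and proves the analogous bound $\|Y\|_1\leq 2\sqrt{r}$ for $Y\in E_r$ via polar-cone duality and the subdifferential of the nuclear norm (Lemma~\ref{absch}, fed into Proposition~\ref{CRPWT}). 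These are interchangeable here; the descent-cone formulation is what lets the paper reuse the argument verbatim for the positive-semidefinite variant with the function $f$ of \eqref{f:posdef}, where the subdifferential changes but the tangent-space argument would have to be redone.

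Where you overestimate the difficulty is step (b). After trace duality the quantity to control is $\BE\|\frac{1}{\sqrt m}\sum_j\varepsilon_j a_ja_j^*\|_\infty$, which is a supremum over the \emph{Euclidean} sphere only --- the nuclear-norm ball has already been paid for by the factor $2\sqrt{2r}$ --- so no generic chaining or Gordon-type comparison adapted to $T_r$ is needed. The logarithm in matrix Bernstein comes from the dimensional factor in the matrix mgf bound; the standard scalar Bernstein plus $\varepsilon$-net argument over $\BS^{n-1}$ for the sub-exponential variables $\varepsilon_j|\langle a_j,u\rangle|^2$ gives $\BE\|\sum_j\varepsilon_j a_ja_j^*\|_\infty\leq C(\sqrt{mn}+n)\leq C'\sqrt{mn}$ once $m\geq c n$, hence $\BE\|H\|_\infty\leq c_2\sqrt{n}$ with no logarithm. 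This is exactly the covariance-estimation bound the paper imports from \cite[Section~5.4.1]{ve12}, and it is the only place the sub-gaussianity of the $a_j$ (as opposed to the mere $4$-design moment conditions) is used. With that citation your argument closes and yields $m\geq C_1 rn$ with failure probability $e^{-C_2 m}$ as claimed.
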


For the rank one case $r=1$, Theorem \ref{Th1} essentialy reproduces the main result in \cite{candes_solving_2012} which uses completely different proof techniques. (More precisely,  for $X$ of rank $1$ the estimate in loc. cit. is $\| X - X^\# \|_2 \leq \frac{C \Vert \epsilon \Vert_1}{{m}}$ with high probability.)
A variant of the above statement was  shown in \cite{tr14} to hold (in the real case) for a fixed matrix $X$ of rank one. (More precisely, in loc. cit. it is assumed that $X$ is positive semidefinite and the optimization  is performed wrt. the function $f$ given by (\ref{f:posdef}) below.)
 In fact, our proof reorganizes and extends the arguments of \cite[Section~8]{tr14} in such a way, that Theorem~8.1 of loc.\ cit.\ is shown to hold even uniformly (that is simultaneously for all $X$) and for arbitrary rank.
On the contrary to \cite{candes_solving_2012}, we will not need $\varepsilon$-nets to show uniformity.

\subsection{Recovery with 4-designs} 

As we will see, the proof method for Theorem~\ref{Th1} can also be applied to measurements drawn independently 
from a weighted complex projective $4$-design in the sense of Definition~\ref{def:exact_design}. 
In \cite{grkrku13} exact complex projective $t$-designs have been applied to the problem of phase retrieval. 
The main result (Theorem 1) in \cite{grkrku13} is a non-uniform exact recovery guarantee for phase retrieval via the convex optimization problem (\ref{eq:convex_program}) 
that requires $m = \mathcal{O} \left( t n^{1+2/t} \log^2 n \right)$ measurement vectors that are drawn uniformly from a proper $t$-design ($t \geq 3$). 
The proof technique which we are going to employ here, allows for considerably generalizing and improving this statement. 
We will draw the measurement vectors $a_1,\hdots,a_m \in \BC^n$ independently at random
from a weighted $4$-design $\left\{p_i, w_i \right\}_{i=1}^N$, which means that for each draw of $a_j$, the design element
$w_i$ is selected with probability $p_i$. In the sequel we assume that $n\geq 2$.

\begin{theorem} \label{Th2}
Let $\left\{p_i,w_i \right\}_{i=1}^N$ be a weighted $4$-design and
consider the measurement process described in (\ref{eq:measurement_process}) with measurement matrices $A_j = \sqrt{n(n+1)}a_j a_j^*$, where $a_1,\ldots,a_m \in \BC^n$ are drawn independently 
from $\left\{p_i, w_i \right\}_{i=1}^N$. 
Furthermore assume that the number of measurements $m$ obeys
\begin{equation*}
m \geq  C_4 n r \log n,
\end{equation*}
for $1 \leq r \leq n$ arbitrary. 
Then with probability at least $1 - \mathrm{e}^{-C_5 m}$ it holds that for any  $X \in \CH_n$ with rank at most $r$, 
any solution $X^\#$  to the convex optimization problem (\ref{eq:convex_program})  
with noisy measurements $b = \CA (X)+\epsilon$, where $\| \epsilon \|_{\ell_2} \leq \eta$, obeys
\begin{equation}\label{err:bound2}
\| X - X^\# \|_2 \leq \frac{C_6 \eta}{\sqrt{m}}.
\end{equation}
 Here, $C_4,C_5,C_6>0$ again denote universal positive constants. 
\end{theorem}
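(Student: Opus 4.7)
The plan is to adapt the proof of Theorem~\ref{Th1} to the weaker stochastic structure of a $4$-design by following the Koltchinskii--Mendelson \emph{bowling} scheme, only substituting Gaussian-specific inputs by their $4$-design analogues. Write $\mathcal{B}_j(Z) := \sqrt{n(n+1)}\,\langle a_j, Z a_j\rangle$, so that $\mathcal{A}(Z)_j = \mathcal{B}_j(Z)$. By the standard tangent-cone/null-space argument for nuclear norm minimization with noisy data, the stability bound (\ref{err:bound2}) will follow once one shows that with probability at least $1-e^{-C_5 m}$ the lower isometry
\begin{equation*}
\frac{1}{\sqrt m}\|\mathcal{A}(Z)\|_{\ell_2}\;\geq\;\kappa\,\|Z\|_2
\end{equation*}
holds uniformly for all $Z$ in the set $T_r:=\{Z\in\CH_n:\ \|Z\|_2=1,\ \|Z\|_1\leq C\sqrt r\}$, which contains every normalized descent direction of $\|\cdot\|_1$ at a rank-$r$ matrix after the usual head/tail truncation.

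I would obtain this lower bound from the Koltchinskii--Mendelson small-ball inequality, which produces
\begin{equation*}
\inf_{Z\in T_r}\frac{1}{\sqrt m}\|\mathcal{A}(Z)\|_{\ell_2}\;\geq\;\tau p_0-2 W_m-\frac{u}{\sqrt m}
\end{equation*}
with probability at least $1-e^{-u^2/2}$, provided we control two quantities: a uniform small-ball probability $\mathbb{P}(|\mathcal{B}(Z)|\geq\tau\|Z\|_2)\geq p_0$ for all $Z\in T_r$, and the Rademacher mean width
\begin{equation*}
W_m\;:=\;\mathbb{E}\sup_{Z\in T_r}\Bigl|\frac{1}{m}\sum_{j=1}^m \varepsilon_j\,\mathcal{B}_j(Z)\Bigr|.
\end{equation*}
For the small-ball step I would apply Paley--Zygmund to $\mathcal{B}(Z)^2$. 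Because $\{p_i,w_i\}$ is a weighted $4$-design, the second and fourth moments of $\mathcal{B}(Z)$ are computed exactly by the symmetric-subspace formula (\ref{eq:PSym_Haar}): $\mathbb{E}\mathcal{B}(Z)^2=\|Z\|_2^2+\tr(Z)^2$, and $\mathbb{E}\mathcal{B}(Z)^4$ decomposes as a sum over cycle structures of $S_4$ of products $\tr(Z^{|c|})$. On $T_r$ the trace term is tame, $|\tr(Z)|\leq\|Z\|_1\leq C\sqrt r$, which together with $|\tr(Z^k)|\leq\|Z\|_2^k$ for $k\geq 2$ yields the Paley--Zygmund lower bound with absolute constants $\tau,p_0$ after the usual decomposition $Z=Z_+-Z_-$ into positive and negative parts.

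The main obstacle is the bound on $W_m$, and it is exactly this step that is responsible for the extra $\log n$ factor relative to Theorem~\ref{Th1}. By the Schatten duality $\|Z\|_1\leq C\sqrt r$ on $T_r$, one has
\begin{equation*}
W_m\;\leq\;C\sqrt r\,\mathbb{E}\Bigl\|\frac{1}{m}\sum_{j=1}^m \varepsilon_j\sqrt{n(n+1)}\,a_ja_j^*\Bigr\|_\infty.
\end{equation*}
Unlike the Gaussian case, a $4$-design supplies no subgaussian tails, only the deterministic identity $\|a_ja_j^*\|_\infty=1$ and matching low moments. I would therefore invoke a noncommutative Khintchine / matrix Rudelson inequality. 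Since $\mathbb{E}\,a_ja_j^*=\frac{1}{n}I$, the expected operator norm of $\sum a_ja_j^*$ is of order $m/n$, and Rudelson's bound gives $\mathbb{E}\|\sum_j\varepsilon_j a_ja_j^*\|_\infty\lesssim\sqrt{m\log(n)/n}$; after the scaling and normalization this produces
\begin{equation*}
W_m\;\lesssim\;\sqrt{\frac{r\,n\log n}{m}}.
\end{equation*}

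Choosing $m\geq C_4 rn\log n$ with $C_4$ sufficiently large makes $W_m\leq\tau p_0/8$, and setting $u=c\sqrt m$ absorbs the concentration term, so the Koltchinskii--Mendelson bound delivers $\kappa\geq \tau p_0/2$ with probability at least $1-e^{-C_5 m}$. Feeding this uniform lower isometry into the standard argument that turns a lower bound on $\mathcal{A}$ on the $\|\cdot\|_1$-descent cone into stable reconstruction (together with $\|\mathcal{A}(X)-b\|_{\ell_2}\leq 2\eta$ by feasibility of $X$) yields the error estimate (\ref{err:bound2}), completing the proof.
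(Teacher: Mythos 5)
Your proposal follows essentially the same route as the paper: the uniform bowling scheme of Proposition~\ref{CRPWT} and Theorem~\ref{KMT} applied to the union of normalized descent cones (your $T_r$ plays the role of $E_r$ via the bound $\|Y\|_1\leq 2\sqrt r$ of Lemma~\ref{absch}), a Paley--Zygmund small-ball estimate fed by the exact second and fourth moments that the $4$-design reproduces through $P_{\Sym^t}$ (Proposition~\ref{prop:designs_Q} and Lemma~\ref{lem:Psym}), and a Rudelson/noncommutative-Khintchine plus matrix-Chernoff bound $\BE\|H\|_\infty\lesssim\sqrt{n\log n}$ for the Rademacher width (Proposition~\ref{prop:designs_H}), which is exactly where the extra $\log n$ enters.

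One step needs repair: your justification of the small-ball constant. You argue that on $T_r$ the trace term is ``tame'' because $|\tr(Z)|\leq\|Z\|_1\leq C\sqrt r$, and conclude that Paley--Zygmund gives absolute constants $\tau,p_0$. But bounding $\tr(Z)$ by $C\sqrt r$ and using only $(\BE S^2)^2\geq 1$ leaves a fourth moment as large as $\tr(Z)^4\sim r^2$, so the ratio $(\BE S^2)^2/\BE S^4$ would degrade to order $1/r^2$; pushed through Theorem~\ref{KMT} this would inflate the sample complexity to roughly $m\gtrsim r^5 n\log n$ rather than $rn\log n$. The correct argument (as in Proposition~\ref{prop:designs_Q}) does not discard the trace term from the numerator: one applies Paley--Zygmund at the level $\xi^2\,\BE S^2$ (legitimate since $\BE S^2=1+\tr(Z)^2\geq 1$) and observes that
\begin{equation*}
(\BE S^2)^2=(1+\tr(Z)^2)^2\geq\max\{1,\tr(Z)^4\}
\qquad\text{while}\qquad
\BE S^4\leq 24\,\max\{1,\tr(Z)^4\},
\end{equation*}
so the $\tr(Z)^4$ growth cancels in the ratio and $Q_\xi\geq(1-\xi^2)^2/24$ holds with a genuinely $r$-independent constant, in fact for every $Z$ with $\|Z\|_2=1$ (no restriction to $T_r$ is needed for the exact-design case; the $\|Z\|_1\lesssim\sqrt r$ bound is only required for the approximate designs of Theorem~\ref{Th3}). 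The decomposition $Z=Z_+-Z_-$ you invoke is not needed. Finally, your claim that $\BE\|\sum_j a_ja_j^*\|_\infty\asymp m/n$ does not follow from the $1$-design identity $\BE\,a_ja_j^*=\tfrac1n\id$ alone; it requires the matrix Chernoff bound together with the standing assumption $m\gtrsim n\log n$, which your hypothesis $m\geq C_4 nr\log n$ supplies. With these two points fixed, your argument coincides with the paper's proof.
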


The normalization factor $\sqrt{n(n+1)}$ leads to approximately the same normalization of the $A_j$ (wrt. the Frobenius norm) as in expectation in the Gauss case.
The theorem is a stable, uniform guarantee for recovering arbitrary Hermitian matrices of rank at most $r$ 
with high probability using the convex optimization problem (\ref{eq:convex_program}) 
and $m = \mathcal{O} \left( nr \log (n) \right)$ measurements drawn independently (according to the design's weights) from a weighted 4-design. 
It obviously covers sampling from \emph{proper} 4-designs as a special case. 

Also, Theorem \ref{Th2} is close to optimal in terms of the design order $t$ required. In the context of the phase retrieval 
problem\footnote{i.e., recovering unknown Hermitian matrices of rank one} it was shown in  \cite[Theorem 2]{grkrku13}, 
that choosing measurements uniformly from a proper 2-design does not allow for a sub-quadratic sampling rate $m$ without 
additional structural assumptions on the measurement ensemble. 
It is presently open whether Theorem~\ref{Th2} also holds for $3$-designs.

Finally, note that the results for Gaussian measurement vectors and 4-designs are remarkably similar. 
They only differ by a logarithmic factor. 
This underlines the usefulness of complex projective designs as a general-purpose tool 
for de-randomization -- see e.g. \cite[Section 1.1.]{grkrku13} for further reading on this topic.
Also, Theorem~\ref{Th2} resembles insights in the context of distinguishing quantum states
\cite{matthews_distinguishability_2009,ambainis_quantum_2007},
where it was pointed out that (approximate) $4$-designs ``perform almost as good'' as uniform measurements (projectors onto random Gaussian vectors).
Note that we will generalize Theorem~\ref{Th2} to approximate $4$-designs in Theorem~\ref{Th3} below.

\subsection{Extensions}

In this section we state variants of the main theorems which can be proved in a similar way.

\subsubsection{Real-valued case}
\label{subsub:real_Gauss}
Theorem~\ref{Th1} is also valid in the real case, i.e., assuming that the $a_j$ are real standard Gaussian distributed and  $\CH_n$ is replaced by the space $\CS_n$ of real symmetric $n\times n$-matrices. 
The proof of the corresponding statement is very similar to the one of Theorem~\ref{Th1} and we sketch the necessary adaptations in Subsection~\ref{sub:real_Gauss_proof}.

\subsubsection{Recovery of positive semidefinite matrices} \label{subsub:psd}
The matrix $X$ to be recovered 
may be known to be positive semidefinite ($X \succcurlyeq 0$) 
in advance. In this case, one can enforce the reconstructed matrix to be positive semidefinite by
considering the optimization program
\[
\underset{Z \succcurlyeq 0}{\textrm{minimize }}  \tr(Z)  \quad 
\textrm{ subject to } \quad  \| \CA (Z ) - b \|_{\ell_2} \leq \eta 
\]
instead of the nuclear norm minimization program \eqref{eq:convex_program}. 
Then analog versions of Theorems~\ref{Th1}, \ref{Th2} and \ref{Th3} hold. In particular, the error bounds \eqref{err:bound1}, \eqref{err:bound2} remain valid. 
In the noisy case $\eta > 0$, this does not follow directly from these theorems, since the minimizer of the nuclear norm minimization
\eqref{eq:convex_program} is not guaranteed to be positive semidefinite in the noisy case. The proof proceeds similarly as the ones for the case $X \in \CH_n$. Instead of
the nuclear norm one has to consider (as in  \cite{tr14}) the function
\begin{equation}\label{f:posdef}
f:\CH_n\to \BR\cup\{\infty \}, 
\quad f(X)=\begin{cases}
  \tr(X),  & \text{if } X\succcurlyeq 0 \\
   \infty, & \text{otherwise. } \end{cases}
\end{equation}

\section{Applications to quantum state tomography}

A particular instance of matrix recovery is the task of reconstructing a finite $n$-dimensional quantum mechanical system
which is fully characterized by its \emph{density operator} $\rho$ -- an $n \times n$-dimensional positive semidefinite matrix with trace one. 
Estimating  the density operator of an actual (finite dimensional) quantum system is an important task in quantum physics known as \emph{quantum state tomography}. 

One is often interested in performing tomography for quantum systems  that have certain structural properties. 
An important structural property -- on which we shall focus here -- is \emph{purity}. 
A quantum system is called \emph{pure}, if its density operator has rank one and \emph{almost pure} if it is well approximated by a matrix of low rank $\mathrm{rank}(\rho)=r \ll n$. 
Assuming this structural property, quantum state tomography is a low-rank matrix recovery problem \cite{gross_quantum_2010,gross_recovering_2011,flammia_quantum_2012,liu_universal_2011}. 
An additional requirement for tomography is the fact that the measurement process has to be ``experimentally realizable'' and -- preferably -- ``efficiently'' so.

Any ``experimentally realizable'' quantum mechanical measurement corresponds to a \emph{positive operator-valued measure} (POVM). 
In the special case of (finite) $n$-dimensional quantum systems, a POVM is a set of positive semidefinite matrices
$\left\{M_j\right\}_{j=1}^N \subset \CH_n$ that sum up to the identity, i.e., 
$\sum_{j=1}^N M_j = \id$ -- see e.g. \cite[Chapter 2.2.6]{nielsen_quantum_2010} for further information. 

For practical reasons, it is highly desirable that a quantum measurement (represented by a POVM) 
can be implemented with reasonable effort.
In accordance with \cite{nielsen_quantum_2010}, we call a POVM-measurement \emph{efficient} (or \emph{practical}), 
if it can be carried out by performing a number of $\mathcal{O} \left( \polylog (n) \right)$ elementary steps\footnote{This notion is comparable to the \emph{circuit depth} in classical computer science.}. 
Making this notion precise would go beyond the scope of this work and we refer to \cite{ambainis_quantum_2007,nielsen_quantum_2010} for further reading. 

Below we will concentrate on random constructions of the vectors $a_j$.
We note, however, that implementing the POVM element $a_j a_j^*$ corresponding to the projection onto a Gaussian random vector
is {\em not} efficient 
as it requires $\CO \left( \mathrm{poly}(n) \right)$ steps. 
This renders all low rank matrix recovery guarantees which rely on Gaussian measurements -- like in Theorem \ref{Th1} above
-- inefficient (and therefore impractical) for low rank quantum state tomography.
Utilizing a weakened concept of $t$-designs discussed next, we partly overcome this obstackle with
Theorem~\ref{Th3} below and its possible implementations outlined in Sections~\ref{subsub:ambainis_POVM}, \ref{subsub:unitary}.

\subsection{An analogue of Theorem \ref{Th2} for approximate designs} \label{sub:approx_designs}

While Theorem \ref{Th2} is a substantial derandomization of Theorem \ref{Th1} and therefore interesting from a theoretical point of view, 
its usefulness hinges on the availability of constructions of exact weighted 4-designs. Unfortunately, such constructions are notoriously difficult to find unless
one relies on randomness, for which, however, the resulting designs are {\em not} efficient in the sense described in the previous section.
One way to circumvent these difficulties 
is to relax the defining property (\ref{eq:exact_designs}) of a $t$-design.
This approach was -- up to our knowledge -- introduced by A.~Ambainis and J.~Emerson \cite{ambainis_quantum_2007}
and resulted in the notion of approximate designs which is by now well established in quantum information science. 

\begin{definition}[\emph{Approximate $t$-design}] \label{def:approx_designs}
We call a weighted set  $\left\{p_i,w_i \right\}_{i=1}^N$ of normalized vectors
an approximate $t$-design of $p$-norm accuracy $\theta_p$, if
\begin{equation}
 \left\| \sum_{i=1}^N p_i \left( w_i w_i^* \right)^{\otimes t} - \int_{\BC P^{n-1}} \left( w w^* \right)^{\otimes t} \mathrm{d}w \right\|_p \leq  \binom{n+t-1}{t}^{-1} \theta_p.
\label{eq:approx_designs}
\end{equation}
\end{definition}

While accuracy measured in arbitrary Schatten-$p$-norms is conceivable, the ones measured in  operator norm ($p = \infty$) \cite{hayden_randomizing_2004,ambainis_quantum_2007,low_pseudo_2010,brandao_local_2012} and nuclear norm ($p =1$) \cite{nakata_generating_2014} are the ones most commonly used -- at least in quantum information theory.
For these two accuracies, 
the definition in particular assures that every approximate $t$-design is in particular also a $k$-design for any $1 \leq k \leq t$ with the same $p$-norm accuracy $\theta_p$ \cite{ambainis_quantum_2007,low_pseudo_2010}. 
For the sake of being self-contained we provide a proof of this statement in the appendix -- see Lemma \ref{lem:design_hierarchy}.

A slightly refined analysis reveals that Theorem~\ref{Th2} also holds for sufficiently accurate approximate 4-designs.

\begin{theorem} \label{Th3}
Fix $1 \leq r \leq n$ arbitrary and let $\left\{p_i, w_i \right\}_{i=1}^N$ be an approximate 4-design satisfying
\begin{eqnarray}
\left\| \sum_{i=1}^N p_i w_i w_i^* - \frac{1}{n} \id \right\|_\infty &\leq& \frac{1}{n}, \label{eq:approx_tight_frame}
\end{eqnarray}
that admits either operator norm accuracy {$\theta_\infty \leq 1/(16r^2)$}, or trace-norm accuracy $\theta_1 \leq 1/4$, respectively. 
Then, the recovery guarantee from Theorem \ref{Th2} is still valid (possibly with slightly worse absolute constants $\tilde{C}_4, \tilde{C}_5$ and $\tilde{C}_6$).  
\end{theorem}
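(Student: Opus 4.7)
The plan is to revisit the proof of Theorem~\ref{Th2} and show that each appeal to the exact 4-design identity~\eqref{eq:exact_designs} can be replaced by an approximate version with only a constant-factor loss under the hypotheses stated. The proof of Theorem~\ref{Th2} follows the bowling scheme of Mendelson and Koltchinskii, whose two quantitative inputs are a marginal small-ball probability and a Rademacher-type empirical width for the random functional $\xi_Z = \tr(Z A)$ as $Z$ ranges over a rank-constrained subset of the Frobenius sphere. Both inputs are ultimately controlled by second and fourth moments of $\xi_Z$, and it is exactly these two moments that the (approximate) 4-design property governs via~\eqref{eq:PSym_Haar}.

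First I would quantify the perturbation of the low-order moments. For $t\in\{1,2,3,4\}$ set $M_t^{\mathrm{Haar}} = \binom{n+t-1}{t}^{-1} P_{\Sym^t}$ and $E_t = \sum_i p_i (w_i w_i^*)^{\otimes t} - M_t^{\mathrm{Haar}}$, so that
\[
\mathbb{E}[\xi_Z^t] = [n(n+1)]^{t/2}\bigl(\tr(Z^{\otimes t} M_t^{\mathrm{Haar}}) + \tr(Z^{\otimes t} E_t)\bigr).
\]
By the design hierarchy (Lemma~\ref{lem:design_hierarchy}) the hypothesised 4-design accuracy transfers to all $t\leq 4$ with the same constant, and H\"older's inequality yields $|\tr(Z^{\otimes t}E_t)|\leq \min\bigl(\|E_t\|_\infty\|Z\|_1^t,\ \|E_t\|_1\|Z\|_\infty^t\bigr)$. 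Since $\|Z\|_1\leq\sqrt{r}\,\|Z\|_2$ and $\|Z\|_\infty\leq\|Z\|_2$ for rank-$r$ matrices, the choices $\theta_\infty\leq 1/(16r^2)$ or $\theta_1\leq 1/4$ force the relative perturbation of the Haar moments at $t=2,4$ to be bounded by a small absolute constant.

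Next I would feed these moment comparisons into the two steps of the bowling scheme exactly as in Theorem~\ref{Th2}. The small-ball lower bound is obtained from Paley--Zygmund using $\mathbb{E}[\xi_Z^2]$ and $\mathbb{E}[\xi_Z^4]$; by the previous paragraph it matches its exact 4-design counterpart up to constants, giving the same uniform lower bound. The empirical width is controlled by an $L^4$--$L^2$ norm equivalence for $\xi_Z$ combined with a chaining/covering argument; the same equivalence survives our perturbation bounds and yields the same estimate $W_m\lesssim\sqrt{rn\log n/m}$. The auxiliary condition~\eqref{eq:approx_tight_frame} enters the argument to certify that $\mathbb{E}[A_j]$ is close to the identity, controlling the unavoidable first-moment bias before any of the centred higher-moment computations can proceed.

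The main obstacle will be the careful bookkeeping of Schatten-norm factors, in particular verifying that the two different accuracy hypotheses yield the correct scalings. The $1/(16r^2)$ factor is dictated by $\|Z^{\otimes 4}\|_1=\|Z\|_1^4\leq r^2\|Z\|_2^4$, which amplifies operator-norm errors by $r^2$; the trace-norm case is insensitive to the rank because $\|Z^{\otimes 4}\|_\infty=\|Z\|_\infty^4\leq\|Z\|_2^4$, which explains the cleaner hypothesis $\theta_1\leq 1/4$. Once the moment comparisons are in place the remainder of the proof of Theorem~\ref{Th2} applies verbatim, and the constants $\tilde{C}_4,\tilde{C}_5,\tilde{C}_6$ simply absorb the finite multiplicative losses coming from the approximation.
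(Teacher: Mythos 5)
Your overall strategy is the paper's: rerun the proof of Theorem~\ref{Th2}, perturb the second and fourth moments of $S=\tr(aa^*Z)$ by a H\"older estimate against the design error $E_t$ (using the hierarchy Lemma~\ref{lem:design_hierarchy} for $t=2$), verify that the relative loss is a small constant under either accuracy hypothesis, and rerun Paley--Zygmund; your explanation of why the operator-norm case needs $\theta_\infty\lesssim 1/r^2$ while the trace-norm case needs only $\theta_1\lesssim 1$ (duality pairing $\|E_t\|_\infty\|Z\|_1^t$ versus $\|E_t\|_1\|Z\|_\infty^t$) is exactly the paper's. There is, however, one genuine slip in the justification: the set over which $Z$ ranges in the bowling scheme is \emph{not} a rank-constrained subset of the Frobenius sphere. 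It is $E_r=K_r\cap\BS^{d-1}$, the union of the descent cones of the nuclear norm at all rank-$\leq r$ matrices, whose elements are in general full rank, so your appeal to ``$\|Z\|_1\leq\sqrt{r}\,\|Z\|_2$ for rank-$r$ matrices'' does not apply to the $Z$ you actually need. The correct input is Lemma~\ref{absch} in the form \eqref{eq:absch1}, i.e.\ $\|Z\|_1\leq 2\sqrt{r}$ for all $Z\in E_r$ (this is where the constant $16r^2=(2\sqrt r)^4$ comes from). As written, your moment-perturbation bounds are established on the wrong set, and the small-ball estimate would not cover the descent-cone elements where it is needed; the fix is immediate since Lemma~\ref{absch} is already part of the Theorem~\ref{Th2} machinery, but it is the essential point, not bookkeeping.

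A second, more minor, discrepancy: the empirical width $W_m$ in the paper is not controlled by an $L^4$--$L^2$ equivalence plus chaining/covering (the paper deliberately avoids $\varepsilon$-nets). It is bounded by $W_m\leq 2\sqrt{r}\,\BE\|H\|_\infty$ via the same Lemma~\ref{absch}, followed by the noncommutative Khintchine inequality and the matrix Chernoff bound of Theorem~\ref{thm:matrix:Chernoff}. The only place the approximate-design hypothesis touches this part is the estimate of $\|\sum_j\BE[a_ja_j^*]\|_\infty$ in \eqref{eq:matrix_Chernoff2}, which is exactly what condition \eqref{eq:approx_tight_frame} repairs -- so you correctly identified \emph{where} that hypothesis is consumed, even though the surrounding mechanism you describe is not the one the proof uses.
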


\subsection{Protocols for efficient low rank matrix recovery} \label{sub:protocols}

Up to now, efficient recovery of low rank density operators by means of the convex optimization problem (\ref{eq:convex_program}) has been established 
for random measurements of (generalized) Pauli observables \cite{gross_quantum_2010,gross_recovering_2011}. For this type of measurements, the statistical issues are well understood \cite{flammia_quantum_2012} and Y.K.~Liu managed to prove a uniform recovery guarantee \cite{liu_universal_2011} which is comparable to the results presented here. Also, this procedure has been tested in experiments \cite{schwemmer_experimental_2014}. 

Theorem \ref{Th3} is similar in spirit and we show here that it permits efficient low rank quantum state tomography for different types of measurements. 
Indeed, in the field of quantum information theory,
 various ways of constructing approximate $t$-designs are known. 
Most of these methods are inspired by ``realistic'' quantum mechanical setups (e.g. the circuit model \cite[Chapter 4]{nielsen_quantum_2010})
and can therefore be -- in principle -- implemented efficiently in an actual experiment. 

Introducing these constructions in full detail would go beyond the scope of this work and we content ourselves with sketching two possible ways of generating
approximate 4-design measurements which meet the requirements of Theorem \ref{Th3}. 
For further clarification on the concepts used here, we refer directly to the stated references.

From now on we shall assume that the dimension $n = 2^d$ is a power of two ($d$-qubit density operators).

\subsubsection{The Ambainis-Emerson POVM} \label{subsub:ambainis_POVM}
In \cite{ambainis_quantum_2007}, the authors provide a way of constructing a normalized  approximate $4$-design of operator-norm accuracy
$\theta_\infty = \mathcal{O} \left( 1/ n^{1/3}\right)$, which in addition is a tight frame. 
They furthermore present a way to generate the corresponding POVM-measurements efficiently -- i.e., involving only $\mathcal{O} \left( \polylog(n) \right)$ elementary steps. 
It therefore meets the requirements of Theorem~\ref{Th3}, provided that the maximal rank $r$ of the unknown density operator obeys
\begin{equation}
r \leq C_7 n^{1/6}, \label{eq:rank_constraint}
\end{equation}
where $C_7$ is a sufficiently small absolute constant. 
The additional rank requirement stems from the fact that the resulting design only has limited accuracy.

This accuracy can be improved if we construct an approximate design  in a much larger space -- say $\BC^{n^6}$ -- and project it down onto an arbitrary $n$-dimensional subspace.
The reason for such an approach is that the projected design's accuracy corresponds to
$\theta_\infty = 
\CO \left( \left( n^6 \right)^{-1/3} \right) = \mathcal{O}(1/n^2)$. 
This allows for replacing (\ref{eq:rank_constraint}) by the much weaker rank constraint
\begin{equation}
r \leq C_8 n, \label{eq:weaker_rank_constraint}
\end{equation}
(where $C_8$ is again a sufficiently small absolute constant) in order to assure that the design's operator-norm accuracy obeys $\theta_\infty \leq 1/(16 r^2)$.

Also, the projected design vectors still form a tight frame, but are sub-normalized, i.e. $\| \tilde{w}_i \|_{\ell_2}^2 = \| P w_i \|_{\ell_2}^2 \leq \| w_i \|_2^2 = 1$.
Here, $P:\BC^{n^6} \to \BC^n$ denotes the projection.  
However, since they are an approximate design's projection onto a smaller space, they maintain all properties of an approximate $4$-design -- most notably Lemma \ref{lem:design_hierarchy} -- 
except normalization. In the proof of Theorem \ref{Th3}, normalization is only used once, namely in (\ref{eq:matrix_Chernoff1}) and sub-normalization is sufficient to guarantee 
this estimate. 
Consequently, Theorem \ref{Th3} is applicable and guarantees universal quantum state tomography via the convex optimization problem (\ref{eq:convex_program}), 
provided that (\ref{eq:weaker_rank_constraint}) holds and $m = C_4 r n \log n$ randomly chosen measurements $\tr \left( \tilde{w}_i \tilde{w}_i^* \rho \right)$ are known.

\subsubsection{Approximate unitary designs} \label{subsub:unitary}
Another way to generate approximate $t$-designs is to consider arbitrary orbits of unitary $t$-designs.
Unitary $t$-designs $\left\{p_i, U_i \right\}_{i=1}^N$ are a natural generalization of the spherical design concept to unitary matrices \cite{dankert_exact_2006,gross_evenly_2007}.
They have the particular property that every weighted orbit $\left\{p_i, U_i x \right\}$ with $\|x\|_{\ell_2} = 1$
of an approximate unitary design forms an approximate complex projective $t$-design of the same accuracy.

It was shown in \cite{brandao_local_2012} that unitary $t$-designs of arbitrary operator-norm accuracy $\theta_\infty$ 
can be constructed efficiently by using local random circuits. 
This approach allows for generating an approximate unitary 4-design of operator-norm accuracy $\theta_\infty \leq 1/(16 n^2)$ by
means of local random circuits of length $C_9 \log (n)^2$, where $C_9$ is a sufficiently large absolute constant. 
Consequently, every orbit of the union of all such local random circuits of length $C_9 \log (n)^2$ forms a normalized approximate 4-design 
which meets the requirements of Theorem \ref{Th3}.
One way of implementing such a measurement consists in choosing a local quantum circuit $U_i$ at random, applying its adjoint circuit $U_i^*$ to the density operator $\rho$ and 
then measuring the two-outcome POVM $\left\{ x x^*, \id - x x^* \right\}$, where $x \in \BC^n$ is arbitrary (but fixed and normalized) to obtain
\begin{equation*}
y_i = \tr \left( x x^* U_i^* \rho U_i \right) = \tr \left(U_i x x^* U_i^* \rho \right) = \tr \left( w_i w_i^* \rho \right).  
\end{equation*}
According to Theorem \ref{Th3}, $m = \tilde{C}_4 nr \log n$ random measurements of this kind are sufficient to reconstruct any density operator $\rho$ of rank at most $r$ with very high probability via the convex optimization problem (\ref{eq:convex_program}).

\begin{remark}
One should note that the approximate unitary designs of \cite{brandao_local_2012} 
are not of a finite nature, because the set of all local random unitaries is continuous. Nevertheless, assuming that such local random unitaries are available as ``basic building blocks'', local random circuits are efficiently implementable in terms of circuit length.
Replacing the atomic expectation values $\sum_{i=1}^N p_i \left( w_i w_i \right)^{\otimes t}$ by 
their continuous counterparts does not change the argument and Theorem~\ref{Th3} remains valid. 
\end{remark}

It is worthwhile to point out that the two possible applications of Theorem \ref{Th3} to the problem of low rank quantum state tomography, as presented here, are not yet optimal.
The implementation using the Ambainis-Emerson POVM -- presented in \ref{subsub:ambainis_POVM} -- suffers from the drawback that it demands either a very strong criterion on the density operator's rank -- condition (\ref{eq:rank_constraint}) -- or generating the design in a much larger space and projecting it down. The latter construction is highly unlikely to be optimal and it is furthermore a priori not clear where the corresponding POVM-measurements can be implemented efficiently.

The second approach, on the other hand, suffers from the drawback that carrying out each of the $C r n \log n$ random measurements requires terminating with a very coarse two-outcome POVM measurement. 
It is very likely that a more fine grained-output statistics could be obtained with comparable effort. 
The recovery protocol stated here, however, does not allow for advantageously taking into account such refined information about the unknown state. 

However, we still feel that mentioning these protocols is worthwhile, as they substantially narrow down the gap between what can be proved (Theorem \ref{Th3} and the protocols presented in subsection \ref{sub:protocols}) and what can be implemented efficiently in an actual quantum state tomography experiment. 
Next, we provide ideas for further narrowing this gap 
and finding more protocols that allow for efficient low rank quantum state tomography.

\subsection{Outlook}

The construction of approximate $t$-designs in Section~\ref{subsub:ambainis_POVM} via projections from higher-dimensional designs
would be much stronger if an efficient protocol for the corresponding POVM measurements could be provided. We leave this for future work.
Alternatively, the authors of \cite{ambainis_quantum_2007} mention results by Kuperberg \cite{ku06} who managed to construct exact $t$-designs 
containing only $\CO \left( n^{2t} \right)$ vectors. They furthermore conjecture that their method of efficiently implementing the corresponding POVM measurement also works for Kuperberg's exact construction. 
Trying to find such an implementation and combining it with Theorem~\ref{Th2} also does constitute an intriguing follow up-project.

{\em Diagonal-unitary designs} are yet another generalization of the spherical design concept to a more restrictive family of unitaries \cite{nakata_generating_2014}.
The notion of a diagonal-unitary design 
depends on choosing a reference basis and is therefore weaker than the unitary design notation from above.
Nevertheless, in \cite[Proposition 1]{nakata_generating_2014} it was shown that the orbit\footnote{
For a diagonal-unitary design with respect to the standard basis $e_1,\ldots,e_n$, their result requires the first Fourier vector $f_1 = \frac{1}{\sqrt{n}} \sum_{i=1}^n e_i$ 
as a fiducial.
This vector is isomorphic to the $| + \rangle^{\otimes d} = \left( \frac{1}{\sqrt{2}} \left( e_1 + e_2 \right) \right)^{\otimes d}$ state which is well-known in quantum information theory.
}
of a particular vector $f_1 \in \BC^n$ under a diagonal-unitary $t$-designs still forms approximate complex projective $t$-designs
with trace-norm accuracy 
\begin{equation}
\theta_1 = \binom{n+t-1}{t} \left( \frac{t(t-1)}{n} + \CO \left( \frac{1}{n^2} \right) \right). \label{eq:diagonal_unitary_design}
\end{equation}
A quick calculation reveals that this orbit forms a normalized tight frame. 
Unfortunately, the trace-norm accuracy (\ref{eq:diagonal_unitary_design}) is too weak for a direct application of Theorem \ref{Th3}. 
However, in \cite[Theorem 1]{nakata_generating_2014} it is shown that the union of all $3$-qubit phase-random circuits forms an exact diagonal-unitary 4-design.
Similar to local random circuits, such $3$-qubit phase-random circuits can in principle be implemented efficiently \cite[Proposition 3]{nakata_generating_2014} in an actual quantum mechanical setup.
Furthermore, comparing (\ref{eq:diagonal_unitary_design})  with the accuracy relation $\theta_\infty \leq \theta_1 \leq n^t \theta_\infty$ -- see Lemma \ref{lem:design_hierarchy} in the appendix --  suggests that particular orbits of diagonal-unitary designs might possess a much tighter operator-norm accuracy, if the spectrum of their ($t$-fold tensored) average were sufficiently flat. Such a result, combined with Theorem \ref{Th3}, would lead to a tomography procedure that is similar to the one of Section~\ref{subsub:unitary}, but uses random 3-qubit phase gates instead of local random circuits.

\section{Proofs}

Our proof technique consists in the application of a uniform version of Tropp's bowling scheme, see \cite{tr14}.
The crucial ingredient is a new method due to Mendelson \cite{me14} and Koltchiskii, Mendelson \cite{kome13} (see also \cite{leme14}) 
to obtain lower bounds for quantities of the form $\inf_{u \in E} \sum_{j=1}^m |\langle \phi_j,x \rangle|^2$ 
where the $\phi_j$ are independent random vectors in $\BR^d$ and $E$ is a subset of $\BR^d$.
We start by recalling from \cite{tr14} the notions and results underlying this technique.

Suppose we measure $x_0\in \BR^d$ via measurements $y=\Phi x_0 + \epsilon\in \BR^m$, where $\Phi$ is an  $m\times d$ measurement matrix and $\epsilon \in \BR^m$ vector of unknown errors. Let $\eta\geq 0$ and assume   $\Vert \epsilon\Vert_{\ell_2}\leq \eta$.
For
$f:\BR^d \to \BR\cup\{\infty \} $  proper convex  we aim at recovering $x_0$ by solving the
 convex program  
\begin{equation}\label{eq:conv:prog:f}
\text{minimize } f(x) \quad \text{ subject to } \quad \Vert \Phi x-y\Vert_{\ell_2}\leq \eta.
\end{equation}
Here, {\em proper convex} means that $f$ is convex and attains at least one finite value.

\noindent Let $K\subseteq \BR^d$ be a cone. Then we define the minimum singular value of $\Phi$ with respect to $K$ as 
$$
\lambda_{\min}(\Phi; K)=\inf\{\Vert \Phi u\Vert_{\ell_2}:  u \in K\cap \BS^{d-1}\}, 
$$
where $\BS^{d-1}$ is the unit sphere in $\BR^d$.
For $x\in \BR^d$, we consider the (convex)  {\em descent cone} 
$$
\CD(f,x)=\bigcup_{\tau> 0} \{y\in \BR^d:\  f(x+\tau y)\leq f(x)\}.
$$
With these notions, the success of the convex program \eqref{eq:conv:prog:f} can be estimated as follows.
\begin{proposition} \label{CRPWT}(\cite{tr14}, see also \cite{chparewi10})
Let $x_0\in \BR^d$, $\Phi\in \BR^{m\times d}$ and  $y=\Phi x_0 +  \epsilon$ with $ \Vert  \epsilon \Vert_{\ell_2}\leq \eta$. Let $f:\BR^d \to \BR\cup\{\infty \} $  be proper convex and let $x^{\sharp}$  be a solution of the corresponding convex program \eqref{eq:conv:prog:f}. Then 
$$
\Vert  x^{\sharp}-x_0\Vert_{\ell_2}\leq \frac{2\eta}{\lambda_{\min}(\Phi; \CD(f,x_0))}.
$$
\end{proposition}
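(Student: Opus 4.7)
The plan is to set $h = x^{\sharp} - x_0$ and show two things about $h$: first, that it lies in the descent cone $\CD(f, x_0)$, and second, that $\|\Phi h\|_{\ell_2}$ is small (at most $2\eta$). The bound on $\|h\|_{\ell_2}$ then follows immediately from the definition of $\lambda_{\min}(\Phi; \CD(f, x_0))$ as an infimum over unit vectors in the cone.

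For the first point, I would observe that $x_0$ is itself feasible for the program \eqref{eq:conv:prog:f}: indeed $\|\Phi x_0 - y\|_{\ell_2} = \|{-\epsilon}\|_{\ell_2} \leq \eta$. Since $x^{\sharp}$ is a minimizer, it follows that $f(x^{\sharp}) = f(x_0 + h) \leq f(x_0)$. Taking $\tau = 1$ in the definition of the descent cone, this shows $h \in \CD(f, x_0)$.

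For the second point, both $x^{\sharp}$ and $x_0$ are feasible, so by the triangle inequality
\begin{equation*}
\|\Phi h\|_{\ell_2} \leq \|\Phi x^{\sharp} - y\|_{\ell_2} + \|y - \Phi x_0\|_{\ell_2} \leq \eta + \eta = 2\eta.
\end{equation*}

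Finally, if $h = 0$ the conclusion is trivial; otherwise, since $h/\|h\|_{\ell_2}$ is a unit vector in the cone $\CD(f, x_0)$, the definition of $\lambda_{\min}$ gives $\|\Phi h\|_{\ell_2} \geq \lambda_{\min}(\Phi; \CD(f, x_0)) \, \|h\|_{\ell_2}$, and combining with the previous display yields the claimed bound. There is no substantial obstacle here: the argument is a direct geometric consequence of feasibility of $x_0$ and the definition of the descent cone; the only point one has to be a little careful about is noting that $\CD(f, x_0)$ being a cone (rather than just a convex set) is exactly what is needed to normalize $h$.
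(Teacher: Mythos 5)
Your proof is correct and is exactly the standard argument for this result: the paper itself gives no proof but cites Tropp \cite{tr14}, whose Proposition 2.2 proceeds precisely as you do — feasibility of $x_0$ puts the error $h=x^{\sharp}-x_0$ in the descent cone (with $\tau=1$), the triangle inequality gives $\|\Phi h\|_{\ell_2}\leq 2\eta$, and the definition of $\lambda_{\min}(\Phi;\CD(f,x_0))$ finishes the bound.
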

\vspace{0.2cm}

The crucial point for us is that in  the situation that $\Phi$ is a random matrix with i.i.d. rows, 
the following theorem can be applied to estimate $\lambda_{\min}(\Phi; \CD(f,x_0))$ (see also \cite{kome13, tr14, me14}).

\begin{theorem}\label{KMT} (Koltchinskii, Mendelson; Tropp's version \cite{tr14})
Fix $E\subset \BR^d$ and let $\phi_1,\hdots,\phi_m$ be independent copies of a random vector $\phi$ in $\BR^d$. 
For $\xi >0$ let 
\begin{align*}
Q_{\xi}(E;\phi) & =\inf_{u\in E}\BP\{ \vert \langle \phi, u \rangle \vert \geq \xi\}\\
\mbox{ and } \quad W_m(E,\phi) &=\BE \sup_{u\in E}\langle h,u\rangle, \quad \text{ where } h=\frac{1}{\sqrt m} \sum_{j=1}^{m}\varepsilon_j \phi_j
\end{align*}
with $(\varepsilon_j)$ being a Rademacher sequence\footnote{A Rademacher vector $\epsilon = (\epsilon_j)_{j=1}^m$ is a vector of independent Rademacher random variables, taking the values $\pm 1$ with equal probability.}. Then for any $\xi >0$ and any $t\geq 0$ with probability at least $1-e^{-2 t^2}$ 
$$
\inf_{u\in E} \left( \sum_{i=1}^{m}\vert \langle \phi_i, u \rangle \vert^2  \right )^{1/2}\geq \xi \sqrt m Q_{2\xi}(E;\phi)-2W_m(E,\phi)-\xi t.
$$
\end{theorem}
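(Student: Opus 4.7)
The plan is to execute the standard three-step small-ball argument: deterministically reduce the squared sum to an empirical indicator count, smooth the indicator into a Lipschitz surrogate so that the resulting empirical process can be tamed by symmetrization and contraction, and concentrate the supremum via McDiarmid's inequality.

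For the first step I use the elementary pointwise bound $s^{2}\geq \xi^{2}\mathbf{1}_{\{|s|\geq \xi\}}$, which after summation and square root gives, for every $u\in E$,
$$\Bigl(\sum_{i=1}^{m} |\langle \phi_{i},u\rangle|^{2}\Bigr)^{1/2} \;\geq\; \xi\sqrt{N_{u}}, \qquad N_{u}:=\#\bigl\{i:\,|\langle \phi_{i},u\rangle|\geq \xi\bigr\}.$$
Since $N_{u}\leq m$, I may replace $\sqrt{N_{u}}$ by $N_{u}/\sqrt{m}$ on the right-hand side, reducing the problem to a uniform lower bound on the empirical counting process $u \mapsto \sum_{i}\mathbf{1}_{\{|\langle \phi_{i},u\rangle|\geq \xi\}}$.

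Next I introduce a $(1/\xi)$-Lipschitz surrogate $\psi:\BR_{+}\to[0,1]$ with $\mathbf{1}_{\{s\geq 2\xi\}}\leq \psi(s)\leq \mathbf{1}_{\{s\geq \xi\}}$ (concretely the piecewise-linear $\psi(s)=\min\{1,\max\{0,s/\xi-1\}\}$). Splitting $\sum_{i}\psi(|\langle \phi_{i},u\rangle|) = m\,\BE\psi(|\langle \phi,u\rangle|) - \Delta(u)$ and using $\BE\psi(|\langle \phi,u\rangle|)\geq \BP(|\langle \phi,u\rangle|\geq 2\xi)\geq Q_{2\xi}(E;\phi)$ I arrive at
$$\inf_{u\in E}\Bigl(\sum_{i} |\langle \phi_{i},u\rangle|^{2}\Bigr)^{1/2} \;\geq\; \xi\sqrt{m}\,Q_{2\xi}(E;\phi) - \frac{\xi}{\sqrt{m}}\sup_{u\in E}\Delta(u),$$
so the entire task reduces to controlling $\sup_{u\in E}\Delta(u)$ in expectation and in deviation. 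The expectation is handled by one-sided symmetrization (which produces a factor $2$) followed by the Ledoux--Talagrand contraction principle applied to the $(1/\xi)$-Lipschitz function $s\mapsto \psi(|s|)$, which vanishes at the origin:
$$\BE \sup_{u\in E}\Delta(u) \;\leq\; 2\,\BE\sup_{u\in E}\sum_{i=1}^{m}\varepsilon_{i}\psi(|\langle \phi_{i},u\rangle|) \;\leq\; \frac{2}{\xi}\,\BE\sup_{u\in E}\sum_{i=1}^{m}\varepsilon_{i}\langle \phi_{i},u\rangle \;=\; \frac{2\sqrt{m}}{\xi}\,W_{m}(E;\phi),$$
so that $\frac{\xi}{\sqrt{m}}\BE\sup_{u}\Delta(u)\leq 2W_{m}(E;\phi)$. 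For the deviation, the map $(\phi_{1},\dots,\phi_{m})\mapsto \sup_{u\in E}\Delta(u)$ satisfies bounded differences with constant $1$ per coordinate because $\psi\in[0,1]$, and the one-sided McDiarmid inequality yields $\BP\bigl(\sup_u\Delta(u)\geq \BE\sup_u\Delta(u)+t\sqrt{m}\bigr)\leq e^{-2t^{2}}$. Assembling all the pieces on this event produces the claim.

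The only delicate issue is bookkeeping the numerical constants: I must invoke the one-sided versions of both symmetrization and McDiarmid so that exactly the factor $2$ appears in front of $W_{m}$ and the tail is $e^{-2t^{2}}$ without any parasitic factor of two. Moreover, I must apply the sharp Ledoux--Talagrand contraction principle for genuine scalar Lipschitz maps, rather than its weaker absolute-value variant; because $s\mapsto\psi(|s|)$ is itself Lipschitz in the scalar argument and vanishes at $0$, this is legitimate and yields constant exactly $1/\xi$, so that all the numerology matches the stated inequality precisely. The rest of the argument is routine bookkeeping.
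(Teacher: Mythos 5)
Your argument is correct and is essentially the proof the paper relies on: the paper does not reproduce a proof but defers to Tropp's version of the Koltchinskii--Mendelson small-ball method in \cite{tr14}, which proceeds exactly via the indicator lower bound, the Lipschitz surrogate between the $\xi$- and $2\xi$-level sets, one-sided symmetrization plus the sharp contraction principle, and a bounded-difference concentration step. Your careful use of the one-sided McDiarmid inequality with constant $2$ in the exponent also recovers the improved probability $1-e^{-2t^2}$ that the paper's accompanying remark claims is obtainable by inspecting that proof.
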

\begin{remark}
We note that the above theorem is stated in \cite{tr14} to hold with probability $1-e^{-t^2/2}$. Inspecting the proof, however, reveals
that the probability estimate can actually be improved to $1-e^{-2t^2}$.
\end{remark}

We will apply the notions in these results in the context of Theorems \ref{Th1} and \ref{Th2} as follows:
   
\begin{itemize}
\item identify $\CH_n$ with $\BR^{d}=\BR^{n^2}$
\item $\Phi$ is the matrix of $\CA$ in the standard basis, i.e., $\Phi (X)_i=\tr(a_ia_i^*X)$ \ \ 
\item $f:\CH_n\to \BR\cup\{\infty \}$ is the nuclear norm, i.e., $f(X)=\Vert X \Vert_1.$ 
\end{itemize}
In particular, $$\CD(f,X)=\bigcup_{\tau> 0} \{Y\in \CH_n:\  f(X+\tau Y)\leq f(X)\}.$$

In Topp's original \emph{bowling scheme}, \cite[Sections 7 and 8]{tr14}, a positive semidefinite matrix $X$ of rank $1$ 
is fixed and Theorem \ref{KMT} is then applied to $E_X=\CD(f,X)\cap \BS^{d-1}$, where $\BS^{d-1} = \left\{ Z \in \CH_n:\; \| Z \|_2 = 1 \right\}$. He then uses the Payley-Zygmund inequality to obtain a lower bound for $Q_{2 \xi}$ (after choosing some appropriate $\xi$)
and finally applies arguments like conic duality to bound $W_m$ from above. 

Our approach  differs from the original bowling scheme in one aspect: instead of fixing one rank $r$-matrix and focusing on $E_X$, we are going to consider the union 
$E_r =\left\{ X \in \CH_n: \; \rank (X) \leq r, \; X \neq 0 \right\}$ of all low rank matrices. 
The rest of the proof essentially parallels the bowling scheme from \cite{tr14}. However, we are going to require an auxiliary statement -- Lemma \ref{absch} below -- in order to obtain a comparable upper bound on $W_m$. This slightly refined analysis is going to result in a uniform recovery result whose probability of success equals the one for non-uniform recovery of a single fixed $X$.
Note that with such an approach, we do not need to use $\varepsilon$-nets in order to establish uniformity.

 For  $r\leq n$ let 
$$
K_r=\bigcup_X \CD(f,X), 
$$ where the union runs over all $X\in \CH_n\setminus \{0\}$ 
of rank at most $r$.
We further define $$E_r=K_r\cap \BS^{d-1}=\bigcup_X E_X,$$ where $E_X=  \CD(f,X)\cap \BS^{d-1}$.
We recall 
that for a convex cone $K\subseteq \BR^d$, its {\em polar cone} is defined to be the closed convex cone 
$$
K^{\circ}=\{ v\in \BR^d:  \langle v,x \rangle \leq 0 \text{ for all } x\in K  \}.
$$

\noindent 
A crucial ingredient for Theorems~\ref{Th1} and \ref{Th2} is the following lemma.

\begin{lemma}\label{absch}
Let $A\in\CH_n$ be a Hermitian $n\times n$-matrix. Then $$
\sup_{Y\in E_r} \tr(A\cdot Y)\leq 2\sqrt{r} \Vert A\Vert_{\infty}.
$$
\end{lemma}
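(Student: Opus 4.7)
The plan is to pick an arbitrary $Y\in E_r$; by definition there is a nonzero Hermitian matrix $X$ with $\rank(X)\le r'\le r$ such that $Y\in \CD(f,X)$ and $\|Y\|_2=1$. Let $X=U_1\Lambda U_1^*$ be the eigendecomposition and $P=U_1U_1^*$ the orthogonal projector onto the range of $X$. With respect to the block decomposition $\BC^n=\mathrm{range}(P)\oplus\mathrm{range}(P^\perp)$, set $Y_{11}=PYP$, $Y_{12}=PYP^\perp$ and $Y_{22}=P^\perp YP^\perp$, and abbreviate $\alpha=\|Y_{11}\|_2$, $\beta=\|Y_{12}\|_2$, $\gamma=\|Y_{22}\|_2$, so that $\alpha^2+2\beta^2+\gamma^2=\|Y\|_2^2=1$. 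Write $Y=Y_T+Y_{T^\perp}$ with $Y_{T^\perp}=Y_{22}$, so that $Y_T$ is the Hermitian matrix whose $(2,2)$ block vanishes.

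First I would derive a descent-cone estimate on the off-support piece $Y_{T^\perp}$. Convexity of $\|\cdot\|_1$ together with $Y\in \CD(f,X)$ implies $\tr(ZY)\le 0$ for every $Z\in\partial\|X\|_1$. The standard description $\partial\|X\|_1=\{\sgn(X)+W:W=P^\perp WP^\perp\text{ Hermitian},\,\|W\|_\infty\le 1\}$ (with $\sgn(X)$ the Hermitian sign matrix of $X$) together with Schatten duality $\sup_{W}\tr(WY_{22})=\|Y_{22}\|_1$ yields $\|Y_{T^\perp}\|_1\le -\tr(\sgn(X)Y_{11})$, using $\sgn(X)=P\sgn(X)P$ to drop the off-block contributions. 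Since $\|\sgn(X)\|_2=\sqrt{r'}$, Cauchy--Schwarz in the Frobenius inner product then gives $\|Y_{T^\perp}\|_1\le\sqrt{r'}\,\alpha$.

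The hard part will be obtaining a \emph{sharp} Schatten-$1$ bound on $Y_T$: the naive estimate $\|Y_T\|_1\le\sqrt{2r'}\|Y_T\|_2$ coming from $\rank(Y_T)\le 2r'$ would only deliver $2\sqrt{2r}\,\|A\|_\infty$ rather than the claimed $2\sqrt{r}\,\|A\|_\infty$. Instead I would split $Y_T=M_1+M_2$ with
$M_1=\begin{pmatrix}Y_{11}/2&Y_{12}\\0&0\end{pmatrix}$ and $M_2=\begin{pmatrix}Y_{11}/2&0\\Y_{12}^*&0\end{pmatrix}$;
each summand has at most $r'$ nonzero rows (respectively columns), so $\rank(M_i)\le r'$. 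Combining $\|M_i\|_1\le\sqrt{r'}\,\|M_i\|_2$ with $\|M_i\|_2^2=\tfrac14\alpha^2+\beta^2$ and the triangle inequality then gives $\|Y_T\|_1\le 2\sqrt{r'}\sqrt{\tfrac14\alpha^2+\beta^2}=\sqrt{r'}\sqrt{\alpha^2+4\beta^2}$.

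Putting the two bounds together via H\"older's inequality produces $\tr(AY)\le\|A\|_\infty(\|Y_T\|_1+\|Y_{T^\perp}\|_1)\le\sqrt{r'}\,\|A\|_\infty\bigl(\sqrt{\alpha^2+4\beta^2}+\alpha\bigr)$, and it remains to maximise $g(\alpha,\beta)=\sqrt{\alpha^2+4\beta^2}+\alpha$ over $\alpha,\beta\ge 0$ with $\alpha^2+2\beta^2\le 1$. Parametrising the boundary as $\alpha=\cos\theta$, $\sqrt{2}\,\beta=\sin\theta$ with $\theta\in[0,\pi/2]$, this becomes $g(\theta)=\sqrt{1+\sin^2\theta}+\cos\theta$, whose derivative $\sin\theta\bigl(\cos\theta/\sqrt{1+\sin^2\theta}-1\bigr)$ is non-positive, so $g$ is monotone decreasing and attains its maximum $g(0)=2$ at $(\alpha,\beta)=(1,0)$. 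Therefore $\tr(AY)\le 2\sqrt{r'}\,\|A\|_\infty\le 2\sqrt{r}\,\|A\|_\infty$, as required; the entire delicacy of the argument is concentrated in the rank-$r'$ splitting $Y_T=M_1+M_2$, without which the sharp constant $2$ cannot be recovered.
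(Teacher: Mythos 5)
Your argument is correct, and it takes a genuinely different route from the paper. The paper proves the lemma in its dual form: it invokes weak duality for cones to bound $\sup_{Y\in E_X}\tr(AY)$ by $\dist_F\bigl(A,\CD(f,X)^{\circ}\bigr)$, identifies the polar cone with the closure of $\bigcup_{\tau\ge0}\tau\,\partial\Vert X\Vert_1$, and then exhibits one explicit subgradient $S=\sum_i\sgn(\lambda_i)x_ix_i^*+\tau^{-1}A_4$ (with $\tau=\Vert A_4\Vert_\infty$, $A_4$ the off-range block of $A$) for which $\Vert A-\tau S\Vert_2\le2\sqrt r\,\Vert A\Vert_\infty$. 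You instead prove the equivalent primal statement \eqref{eq:absch1}, $\Vert Y\Vert_1\le2\sqrt r$ for $Y\in E_r$, directly: the subgradient inequality $\tr(ZY)\le0$ for $Z\in\partial\Vert X\Vert_1$ controls the off-support block via $\Vert Y_{22}\Vert_1\le-\tr(\sgn(X)Y_{11})\le\sqrt{r'}\,\Vert Y_{11}\Vert_2$, the rank-$r'$ splitting $Y_T=M_1+M_2$ recovers the sharp constant on the support part, and a short one-variable optimization closes the bound. Each step checks out, including the maximization of $\sqrt{\alpha^2+4\beta^2}+\alpha$ over $\alpha^2+2\beta^2\le1$ (the maximum $2$ is attained at $\alpha=1$, $\beta=0$), and the constant matches. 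What the paper's route buys is brevity: one clever choice of $S$ and a four-line Frobenius-norm computation, with no need for the $M_1,M_2$ splitting or the optimization. What your route buys is that it establishes the cone-restricted nuclear-norm bound \eqref{eq:absch1} as the primary object --- the form actually invoked later (e.g.\ in the approximate-design estimates, where $\Vert Z\Vert_1\le2\sqrt r$ is used directly) --- and it is the standard descent-cone/null-space-property argument familiar from low-rank recovery, so it generalizes transparently. One cosmetic remark: your aside that the naive bound $\Vert Y_T\Vert_1\le\sqrt{2r'}\,\Vert Y_T\Vert_2$ would yield $2\sqrt{2r}\,\Vert A\Vert_\infty$ is slightly off --- it would give $(1+\sqrt2)\sqrt{r}\,\Vert A\Vert_\infty$ --- but the point that it fails to reach the constant $2$ stands, and this does not affect the proof.
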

By duality and the matrix H{\"o}lder inequality this statement is equivalent to
\begin{equation}
\| Y \|_1 \leq 2 \sqrt{r} \quad \mbox{ for all } Y \in E_r. \label{eq:absch1}
\end{equation}
The following proof is inspired by  \cite[Section 8]{tr14}, where similar arguments are used.
\begin{proof}
It is enough to show that, for any  $X\in\CH_n\setminus \{0\}$   of rank at most $r$, we have  
$$
\sup_{Y\in E_X} \tr(A\cdot Y)\leq 2\sqrt{r} \Vert A\Vert_{\infty}.
$$ 
We  may  assume that $X$ has precisely rank $r\geq 1$.
 By weak duality for cones, see \cite[Proposition~4.2]{tr14} or \cite[eq.\ (B.40)]{fora13}, we have $\sup_{Y\in E_X} \tr(A\cdot Y)\leq \dist_F(A, \CD(f,X)^{\circ})$, where as usual $ \dist_F(A, \CD(f,X)^{\circ})=\inf_{B\in \CD(f,X)^{\circ}}\Vert A-B\Vert_2$. 
By \cite[Fact 4.3]{tr14}, we know that the polar cone $\CD(f,X)^{\circ}$ is the closure of $\bigcup_{\tau\geq 0}\tau\cdot \partial  f(X)$. 
For $S \in \partial f(X)$ and $\tau\geq 0$, it follows that 
$$
\sup_{Y\in E_X} \tr(A\cdot Y)\leq \Vert A-\tau\cdot S\Vert_2.
$$
Write $X=\sum_{i=1}^{r}\lambda_i x_i x_i^*$, where the $x_i$ are orthonormal and the $\lambda_i$ are non-zero.  
Extend $x_1,\hdots,x_r$ to an orthonormal basis $x_1,\hdots,x_n$ of $\BC^n$  and 
write $A$ in the form $$A=\sum \tilde{a}_{i,j}x_ix_j^*.$$ (Hence the $ \tilde{a}_{i,j}$ form the matrix obtained from $A$ by a basis change to $x_1,\hdots,x_n$.) Define the four blocks
 $ A_1=\sum_{i,j\leq r}\tilde{a}_{i,j}x_ix_j^*$, $ A_2=\sum_{i\leq r, j> r}\tilde{a}_{i,j}x_ix_j^* $, $ A_3=\sum_{i>r, j\leq r}\tilde{a}_{i,j}x_ix_j^*=A_2^* $ and $ A_4=\sum_{i,j>r}\tilde{a}_{i,j}x_ix_j^*$. It is well known that $\partial \Vert X \Vert_1$ consists of all matrices of the form
 $$
 S= \sum_{i=1}^{r}\sgn(\lambda_i) x_i x_i^*+S_2,
  $$ where $S_2\in\CH_n$ has the property that $S_2x_i=0$ for all $i\in \{1,\hdots,r \}$ and $\Vert S_2 \Vert_{\infty}\leq 1$. (See for example \cite{wa92}, where the real analogue is shown.) 
Consider now 
$$
S= \sum_{i=1}^{r}\sgn(\lambda_i) x_i x_i^*+ \tau^{-1}A_4\in \partial \Vert X \Vert_1,
\quad \mbox{ where } \tau = \|A_4\|_\infty.
$$ 
(If $\tau=0$, let $S= \sum_{i=1}^{r}\sgn(\lambda_i) x_i x_i^*$.) 
To simplify the notation, write $S_1=\sum_{i=1}^{r}\sgn(\lambda_i) x_i x_i^*$. Then 
\begin{align*}
 \Vert A- \tau S\Vert_2&= \Vert A- A_4-\tau S_1\Vert_2=\left( \tr(A_1-\tau S_1)^2+2\tr(A_2^*A_2)\right )^{1/2} \\
& =\left(\Vert A_1-\tau S_1) \Vert_2^2+2 \Vert {A}_2^*\Vert^2_2\right )^{1/2} \leq \left( 2\Vert {A}_1\Vert^2_2+2\Vert\tau\cdot  S_1 \Vert^2_2+2 \Vert {A}_2^*\Vert^2_2\right)^{1/2}   \\
& =\left(2\Vert A \cdot x_1\Vert_2^2+\hdots+2\Vert A \cdot x_r\Vert_2^2+2\Vert\tau\cdot  S_1 \Vert^2_2\right)^{1/2}\\& \leq\left(2r\| A\|_\infty^2+2r\tau^2\right)^{1/2}\leq 2\sqrt{r} \|A\|_\infty,
\end{align*} since $\tau=\Vert {A}_4\Vert_{\infty}\leq \Vert A\Vert_{\infty}=\lambda$.
\end{proof}

\subsection{Proof of Theorem~\ref{Th1}}

In order to prove both statements of Theorem~\ref{Th1}, it is enough by Proposition \ref{CRPWT} to show that for $m\geq c nr$ with probability at 
least $1-e^{-\gamma m}$ 
$$
\inf_{Y\in E_r}\left (\sum_{j=1}^m \tr(a_ja_j^*Y)^2\right )^{1/2}\geq c_1\sqrt m
$$ 
for  suitable positive constants $c,c_1,\gamma $. For $\xi>0$ let 
\begin{equation}
Q_{\xi}=\inf_{Z\in E_r}\BP(\vert\tr(a_ja_j^*Z)\vert \geq \xi). \label{eq:Q}
\end{equation}
Further let 
\begin{equation}
H=\frac{1}{\sqrt m}\sum_{j=1}^m\varepsilon_ja_ja_j^*, \label{eq:H}
\end{equation}
where the $\varepsilon_j$  form a Rademacher sequence independent of everything else, and
introduce 
$$
W_m=\BE \sup_{Y\in E_r} \tr(H\cdot Y ).
$$ 
By Theorem \ref{KMT}, for any $\xi>0$ and any $t\geq 0$ with probability at least $1-e^{-2t^2},$
$$
\inf_{Y\in E_r}\left (\sum_{j=1}^m (\tr(a_ja_j^*Y))^2\right)^{1/2}\geq \xi\sqrt m Q_{2\xi} -2W_m-\xi t.
$$ 
Following Tropp's bowling scheme, we first estimate $Q_{2\xi}$ for a suitable $\xi$.
As  in \cite{tr14}, we conclude from the Payley-Zygmund inequality (see e.g.\ \cite[Lemma~7.16]{fora13}) that 
\begin{equation}\label{PZ}
\mathbb P \{\vert \langle {aa^*},U\rangle \vert^2\geq \frac{1}{2}(\mathbb E \vert \langle {aa^*},U\rangle \vert^2) \}\geq \frac{1}{4}\cdot \frac{(\mathbb E \vert \langle {aa^*},U \rangle \vert^2)^2 }{\mathbb E \vert \langle {aa^*},U \rangle \vert^4 }.
\end{equation} (Here $a$ follows the standard Gaussian distribution on $\BC^n$.)
Assume now $\Vert U \Vert_2=1$ and write $U=\sum_{i}\lambda_i u_iu_i^*$, where $\sum_i \lambda_i^2=1$ and the $u_i$ are orthonormal.
Then $\langle {aa^*},U \rangle= \tr(aa^*U)=\sum_j \lambda_j \tr(aa^*u_ju_j^*)=\sum_j \lambda_j \vert u_j^*a\vert^2 $ and hence,
$$
\vert \langle {aa^*},U \rangle \vert^2=\sum_{i,j} \lambda_i\lambda_j \vert u_i^*a\vert^2\vert u_j^*a\vert^2.
$$
The $u_j^*a$ form independent standard (complex) Gaussian random variables. To compute the moments of  a standard complex Gaussian  random variable $Z$, write $Z=X+iY$ where $X,Y$ are independent and $\CN(0,\frac{1}{2})$ distributed. The $2k$-th moment of $X$ resp.\ $Y$ is $\frac{(2k)!}{2^{2k}k!}$, which allows us to compute higher moment of $Z$, for example, $\BE \vert Z\vert^2=\BE X^2+\BE Y^2=1$ and $\BE \vert Z \vert^4=\BE X^4+2\BE X^2\BE Y^2+ \BE Y^4=2.$
Similarly, we obtain $\BE \vert Z \vert^6=6$ and $\BE \vert Z \vert^8=24$ (and more generally $\BE \vert Z \vert^{2k}=k!$).
Thus, we conclude that
 \begin{equation}\label{GM2}
\mathbb E\vert \langle {aa^*},U \rangle \vert^2=\sum_{i\neq j}\lambda_i\lambda_j+2\sum_{i}\lambda_i^2= \sum_{i,j} \lambda_i\lambda_j+\sum_i \lambda_i^2=(\sum_i \lambda_i)^2+1\geq 1
\end{equation}
and $$
(\mathbb E\vert \langle aa^*,U \rangle \vert^2)^2=(\sum_i \lambda_i)^4+2(\sum_i \lambda_i)^2+ 1.
$$ 
Expanding 
$\mathbb E\vert \langle {aa^*},U \rangle \vert^4$ in a similar way, we obtain 
\begin{align*}
\mathbb E\vert \langle {aa^*},U \rangle \vert^4&=\sum_{i,j,k,\ell}\lambda_i\lambda_j\lambda_k\lambda_\ell+\sum_{i,k,\ell}\lambda_i^2\lambda_k\lambda_\ell+2\sum_{i,k}\lambda_i^2\lambda_k^2+4\sum_{i,k}\lambda_i^3\lambda_k+16\sum_i \lambda_i^4\\
&=(\sum_i \lambda_i)^4+(\sum_i \lambda_i)^2+2+4(\sum_i \lambda_i)(\sum_i \lambda_i^3)+16\sum_i \lambda_i^4,
\end{align*} 
where we used that $\sum_i \lambda_i^2=1$.
Again because of $\sum_i \lambda_i^2=1$ we have $\vert \lambda_i\vert \leq 1$ for all $i$ and hence $\vert \sum_i \lambda_i^3\vert \leq \sum_i \lambda_i^2=1$ and similarly $\sum_i \lambda_i^4\leq\sum_i \lambda_i^2=1$.
Also observe that $\vert \sum_i \lambda_i \vert \leq 1+(\sum_i \lambda_i)^2$.
Combining these inequalities with the above expressions for $\mathbb E\vert \langle {aa^*},U \rangle \vert^4$ and $(\mathbb E\vert \langle aa^*,U \rangle \vert^2)^2$, we obtain the inequality 
$$
\mathbb E\vert \langle {aa^*},U \rangle \vert^4\leq 24(\mathbb E\vert \langle aa^*,U \rangle \vert^2)^2.
$$
Combining this with (\ref{PZ}) and (\ref{GM2}), we obtain $$
Q_{1/\sqrt 2}\geq \frac{1}{96}.
$$
Thus we choose  $\xi=\frac{1}{2\sqrt 2}$.

In order to estimate $W_m$, we use Lemma \ref{absch} to obtain
\begin{equation}
W_m=\BE \sup_{Y\in E_r} \tr(H\cdot Y )\leq 2\sqrt{r}\cdot \BE \Vert H\Vert_{\infty}. \label{eq:Wm}
\end{equation}
By the arguments in \cite[Section~5.4.1]{ve12} we have $\BE \Vert H\Vert_{\infty}\leq c_2 \sqrt n$ if $m\geq c_3 n$ for suitable constants $c_2, c_3$, 
see also \cite[Section~8]{tr14}. Choosing $t=c_4\sqrt m$ and $m \geq cnr$ for suitable constants $c,c_4$, the proof of Theorem~\ref{Th1} is completed.

\begin{remark}
In \cite{candes_solving_2012}, a uniform result for phase retrieval in the Gaussian case is proved using an inexact dual certificate. One can
write down a generalization of this dual certificate for the rank $r$-case, but following the arguments of loc. cit., the resulting number of required 
measurements then seems to depend significantly worse than linearly on $r$. It might be possible to rather adapt the arguments in \cite{grkrku13,gross_improved_2014} 
based on a different construction of a dual certificate in order to derive linear scaling of $m$ in $r$, but the resulting proof would be more complicated than ours (and likely
lead to more logarithmic factors).
 \end{remark}

\subsection{Proof of Theorem \ref{Th2}}

Let us now turn to proving the analogous result for complex projective 4-designs. 
It is convenient to rescale the (normalized) 
4-design vectors as
\begin{equation}
\tilde{w}_i := \sqrt[4]{(n+1)n} \; w_i \quad \forall i=1,\ldots,N. \label{eq:design_normalization}
\end{equation}
This mimics the expected length of random Gaussian vectors (which corresponds to $\BE \| a_j \|_2^2 = n$) and 
we will call the system $\{\tilde{w}_i\}$ a {\em super-normalized} $4$-design.
We can apply the same technique as in the proof of Theorem \ref{Th1}, provided that we can derive a suitable lower bound for $Q_{2\xi}$ for some $0<\xi <1/2$ and an upper bound for $\BE \| H \|_{\infty}$. 
The following two technical propositions serve this purpose.

\begin{proposition} \label{prop:designs_Q}
Assume that $a$ is drawn at random from a super-normalized 
weighted $4$-design.
Then \begin{equation}
Q_{\xi} = \inf_{Z \in E_r} \BP \left( | \tr \left( a a^* Z \right)| \geq \xi \right) \geq \frac{(1-\xi^2)^2}{24} \label{eq:designs_Q}
\end{equation}
for all $\xi \in [0,1]$. 
\end{proposition}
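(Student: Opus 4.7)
The plan is to apply the Paley-Zygmund inequality to the non-negative random variable $W=|\tr(aa^*Z)|^2$ exactly as in the Gaussian proof, the only new ingredient being that the second and fourth moments of $W$ are no longer computed from Gaussian integrals but from the defining $4$-design identity (\ref{eq:exact_designs}). Concretely, writing $M=ww^*$ and using $a a^* = \sqrt{n(n+1)}\,ww^*$, one has the clean representation
\[
|\tr(aa^*Z)|^{2k} = (n(n+1))^{k}\,\tr\!\bigl((ww^*)^{\otimes 2k} Z^{\otimes 2k}\bigr),
\]
which turns expectations into traces against $\BE[(ww^*)^{\otimes 2k}] = \binom{n+2k-1}{2k}^{-1} P_{\Sym^{2k}}$ (by (\ref{eq:PSym_Haar}) and the 4-design property, valid for $k=1,2$).

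First I would expand $P_{\Sym^{2}} = \tfrac{1}{2}(I+F)$ and use $\tr((I+F)(Z\otimes Z)) = \tr(Z)^2+\tr(Z^2)$ to obtain
\[
\BE W = t_1^2+t_2 = t_1^2+1, \qquad t_k:=\tr(Z^k),
\]
where I have used $\|Z\|_2=1$, i.e.\ $t_2=1$. For the fourth moment I would decompose $P_{\Sym^4}=\frac{1}{24}\sum_{\pi\in S_4}P_\pi$ and evaluate $\tr(P_\pi Z^{\otimes 4}) = \prod_{c\in\mathrm{cycles}(\pi)}t_{|c|}$, counting the $5$ conjugacy classes ($1$ identity, $6$ transpositions, $3$ double-transpositions, $8$ $3$-cycles, $6$ $4$-cycles). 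This gives
\[
\BE W^2 = \frac{n(n+1)}{(n+2)(n+3)}\bigl(t_1^4+6t_1^2+3+8t_1 t_3+6t_4\bigr).
\]

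The next step is to bound this quantity from above uniformly in $Z\in E_r$. Since $\|Z\|_2=1$ forces the eigenvalues to satisfy $|\lambda_i|\le 1$, one has $|t_3|\le t_2=1$ and $t_4\le t_2=1$; together with $\tfrac{n(n+1)}{(n+2)(n+3)}\le 1$ this yields
\[
\BE W^2\le t_1^4+6t_1^2+8|t_1|+9.
\]
Setting $s=|t_1|\ge 0$, a direct algebraic check (completing the square: $42s^2-8s\ge -\tfrac{8}{21}$) gives
\[
24(s^2+1)^2-(s^4+6s^2+8s+9) = 23s^4+42s^2-8s+15\ge 15-\tfrac{8}{21}>0,
\]
so $(\BE W)^2/\BE W^2\ge 1/24$.

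Finally, since $\BE W=t_1^2+1\ge 1$, for any $\xi\in[0,1]$ the value $\theta=\xi^2/\BE W\le \xi^2\le 1$ is admissible in the Paley-Zygmund inequality (see e.g.\ \cite[Lemma~7.16]{fora13}), giving
\[
\BP(|\tr(aa^*Z)|\ge \xi) \ge \BP(W>\xi^2) \ge \Bigl(1-\tfrac{\xi^2}{\BE W}\Bigr)^2 \frac{(\BE W)^2}{\BE W^2} \ge \frac{(1-\xi^2)^2}{24},
\]
uniformly over $Z\in E_r$, which is the claim. The only delicate step is the algebraic inequality guaranteeing the constant $1/24$; everything else is a direct $t$-design moment calculation.
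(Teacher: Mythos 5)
Your proof is correct and follows essentially the same route as the paper: evaluate $\BE S^2$ and $\BE S^4$ via the $4$-design property and the permanent-type expansion of $\tr(P_{\Sym^{2k}}Z^{\otimes 2k})$ (the paper's Lemma~\ref{lem:Psym}), bound $|t_3|,t_4\le 1$, and apply Paley--Zygmund at level $\xi^2\le\xi^2\,\BE S^2$. The only cosmetic difference is the last algebraic step: the paper bounds $(\BE S^2)^2\ge\max\{1,t_1^4\}$ and $\BE S^4\le 24\max\{1,t_1^4\}$ separately, whereas you verify $24(\BE S^2)^2-\BE S^4\ge 0$ directly via the quartic inequality $23s^4+42s^2-8s+15>0$; both give the same constant $24$.
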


The proof of this statement is similar to the proof of Theorem 4 in \cite{ambainis_quantum_2007} and -- likewise -- equation (15) in \cite{matthews_distinguishability_2009}. However, since we are interested in a bound on the probability of an event happening, rather than bounding an expectation value, we use the Payley-Zygmund inequality instead of Berger's one \cite{berger_fourth_1997} (which states $\BE \left[ |S | \right] \geq \BE \left[ S^2 \right]^{3/2}\BE \left[ S^4 \right]^{-1/2}$). 

\begin{proof}
The desired statement follows, if we can show that
\begin{equation}
\BP \left( | \tr \left( a a^* Z \right) | \geq \xi \right) \geq \frac{(1-\xi^2)^2}{24} \label{eq:Qaux2}
\end{equation}
holds for any matrix $Z \in \CH_n$ obeying $\| Z \|_2 =1$. 
For such  $Z$ we define the random variable $S := | \tr \left( a a^* Z \right) |$. 
Since $a$ is chosen at random from a (super-normalized) complex projective 4-design, we can use the design's defining property (\ref{eq:exact_designs}) together with (\ref{eq:PSym_Haar}) to evaluate the second and fourth moment of $S$. 
Indeed,
\begin{eqnarray*}
\BE S^2 
&=& \BE \tr \left( a a^* Z \right)^2 
= \tr \left( \BE \left( a a^* \right)^{\otimes 2} Z^{\otimes 2} \right) 
= \tr \left( \sum_{i=1}^N p_i \left( \tilde{w}_i \tilde{w}_i^* \right)^{\otimes 2} Z^{\otimes 2} \right) \\
&=& (n+1)n  \; \tr \left( \sum_{i=1}^N p_i \left( w_i w_i^* \right)^{\otimes 2} Z^{\otimes 2} \right) 
= (n+1)n \binom{n+1}{2}^{-1} \tr \left( P_{\Sym^2} Z^{\otimes 2} \right) \\
&=& 2 \tr \left( P_{\Sym^2} Z^{\otimes 2} \right) 
\end{eqnarray*}
and likewise
\begin{eqnarray*}
\BE S^4
&=& \BE \tr \left( a a^* Z \right)^4
=  \tr \left( \sum_{i=1}^N p_i \left( \tilde{w}_i \tilde{w}_i^* \right)^{\otimes 4} Z^{\otimes 4} \right) 
= \frac{4! (n+1) n}{(n+3)(n+2)} \tr \left( P_{\Sym^4} Z^{\otimes 4} \right).
\end{eqnarray*}
The remaining right hand sides are standard expressions in multilinear algebra and can for instance be calculated using wiring calculus. Indeed, 
Lemma \ref{lem:Psym}  in the appendix implies that
\begin{equation}
\BE S^2 = 2 \tr \left( P_{\Sym^2} Z^{\otimes 2} \right) = \tr (Z)^2 + \tr (Z^2) = \tr (Z)^2 + 1, \label{eq:Qaux34}
\end{equation}
because $\tr (Z^2) = \| Z \|_F^2 = 1$ by assumption, hence,
\[
(\BE S^2)^2 \geq \max\{1, \tr(Z)^4\}.
\]
Similarly, Lemma \ref{lem:Psym} assures
\begin{eqnarray*}
 \BE S^4
&=& \frac{4! (n+1) n}{(n+3)(n+2)} \tr \left( P_{\Sym^4} Z^{\otimes 4} \right) \\
&=& \frac{(n+1)n}{(n+3)(n+2)} \left( 6 \tr (Z^4) + 8 \tr (Z) \tr (Z^3) + 6 \tr (Z)^2 \tr (Z^2) + 3 \tr (Z^2)^2 + \tr (Z)^4 \right) \\
&\leq&  \left( 6 \tr (Z^4) + 8 \tr (Z) \tr (Z^3) + 6 \tr (Z)^2  + \tr (Z)^4 + 3 \right),
\end{eqnarray*}
where the simplifications in the last line are due to $\tr (Z^2) = \| Z \|_F^2 = 1$ and $\frac{(n+1)n}{(n+3)(n+2)} \leq 1$. 
Using the hierarchy of Schatten-$p$-norms -- in particular $\tr (Z^4) = \| Z \|_4^4 \leq \| Z \|_2^4=1$ and $\tr (Z^3) \leq \| Z \|_3^3 \leq \| Z \|_2^3 = 1$ -- yields
\begin{align*}
 \BE S^4 & \leq 6 \tr (Z^4) + 8 \tr (Z) \tr (Z^3) + 6 \tr (Z)^2  + \tr (Z)^4 + 3 \\
& \leq  \left( 6\| Z \|_4^4 + 8 \| Z \|_3^3 +10 \right) \max \left\{ 1, \tr (Z)^4 \right\} \leq  24 \max \left\{ 1, \tr (Z)^4 \right\}.
\end{align*}
Having precise knowledge of the second and fourth moments and the trivial fact that $\tr (Z)^2 \geq 0$  allows us to use the Payley-Zygmund inequality (for the random variable $S^2$) to bound
\begin{align*}
\BP \left( | \tr \left( a a^* Z \right) | \geq \xi \right)
& =  \BP \left( S^2 \geq \xi^2 \right) 
\geq \BP \left( S^2 \geq \xi^2 \left( 1 + \tr (Z)^2 \right) \right) \\
&=  \BP \left( S^2 \geq \xi^2 \BE S^2  \right) 
 \geq  \left( 1 - \xi^2 \right)^2 \frac{ (\BE {S^2})^2}{\BE S^4} \\
 & \geq (1-\xi^2)^2 \frac{\max\{1,\tr(Z)^4\}}{24 \max\{1,\tr(Z)^4\}} = \frac{(1-\xi^2)^2}{24}.
\end{align*}
This completes the proof.
\end{proof}

\begin{proposition} \label{prop:designs_H}
Let $H$ be the matrix defined in (\ref{eq:H}), where the $a_j$'s are chosen independently at random from a super-normalized 
weighted 1-design. 
Then it holds that
\begin{equation}
\BE \| H \|_{\infty} \leq c_4 \sqrt{n \log (2n)} \quad \mbox{with }c_4 = 3.1049, \label{eq:designs_H}
\end{equation}
provided that $m \geq {2n \log n}$. 
\end{proposition}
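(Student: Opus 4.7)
The plan is to interpret $H = m^{-1/2}\sum_{j=1}^m \varepsilon_j a_j a_j^*$ as a sum of independent, mean-zero, self-adjoint random matrices $Y_j = m^{-1/2}\varepsilon_j a_j a_j^*$ and apply the matrix Bernstein inequality in its expectation form (cf.\ Tropp's user-friendly bounds). Two quantities must be controlled: the almost-sure operator norm bound $L = \max_j \|Y_j\|_\infty$ and the matrix variance $v = \|\sum_j \mathbb{E}[Y_j^2]\|_\infty$.

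The bound on $L$ is immediate from the super-normalization \eqref{eq:design_normalization}: since each $a_j$ is one of the $\tilde{w}_i$, one has $\|a_j\|_{\ell_2}^2 = \sqrt{n(n+1)}$ deterministically, hence $L = \sqrt{n(n+1)/m}$. For the variance I would use the weighted $1$-design property, which together with Schur's Lemma (equation \eqref{eq:PSym_Haar} with $t=1$) yields $\sum_i p_i w_i w_i^* = I/n$, so that after super-normalization $\mathbb{E}[a_j a_j^*] = \sqrt{(n+1)/n}\, I$. Combining this with the identity $(a_j a_j^*)^2 = \|a_j\|_{\ell_2}^2 \, a_j a_j^*$ gives
\[
\mathbb{E}[Y_j^2] = \frac{1}{m}\mathbb{E}[(a_j a_j^*)^2] = \frac{\sqrt{n(n+1)}}{m}\,\mathbb{E}[a_j a_j^*] = \frac{n+1}{m}\,I,
\]
so that $v = n+1$. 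Plugging these into matrix Bernstein gives
\[
\mathbb{E}\|H\|_\infty \leq \sqrt{2(n+1)\log(2n)} \;+\; \tfrac{1}{3}\sqrt{n(n+1)/m}\,\log(2n).
\]

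It remains to collapse this into the single claimed term $c_4\sqrt{n\log(2n)}$. The first summand is already of the right form and contributes the dominant $\sqrt{2}$-like factor; bounding $n+1$ against a small multiple of $n$ for $n\geq 2$ gives a clean constant. For the second summand, I would feed in the assumption $m \geq 2n\log n$ to replace $\sqrt{n(n+1)/m}$ by something of order $\sqrt{n/\log n}$, then convert $\log(2n)/\sqrt{\log n}$ to a multiple of $\sqrt{\log(2n)}$ using $\log(2n) \leq 2\log n$ for $n\geq 2$. Adding the two contributions yields a bound of the form $c_4\sqrt{n\log(2n)}$.

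The only real obstacle here is bookkeeping: the matrix Bernstein inequality itself is off the shelf, but arriving at the specific constant $c_4 = 3.1049$ requires tracking the Bernstein constants $\sqrt{2}$ and $1/3$ together with the elementary inequalities $n+1\leq \alpha n$ and $\log(2n)\leq \beta\log n$ carefully enough to land at the stated numerical value (rather than a slightly larger absolute constant). No deeper ingredient than the $1$-design property, Rademacher independence, and matrix Bernstein appears to be needed.
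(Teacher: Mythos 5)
Your proposal is correct, but it takes a genuinely different route from the paper. The paper first conditions on the $a_j$ and applies the non-commutative Khintchine inequality to the Rademacher sum, which produces the factor $\sqrt{2\log(2n)/m}$ and reduces the problem (via $(a_ja_j^*)^2=\sqrt{n(n+1)}\,a_ja_j^*$, the identity $\|Z^{1/2}\|_\infty=\|Z\|_\infty^{1/2}$ and Jensen) to bounding $\BE\|\sum_j a_ja_j^*\|_\infty$; that last expectation is then controlled by the matrix Chernoff inequality (Theorem~\ref{thm:matrix:Chernoff}), and the hypothesis $m\ge 2n\log n$ is used there to absorb the additive $R\log n=\sqrt{2}\,n\log n$ term into a multiple of $m$, yielding $c_4=2^{3/4}\sqrt{3.4084}=3.1049$. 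You instead fold the Rademacher signs into the mean-zero structure and apply matrix Bernstein in expectation form directly to the independent summands $Y_j=m^{-1/2}\varepsilon_j a_ja_j^*$; your computations of $L=\sqrt{n(n+1)/m}$ and $v=\|\sum_j\BE Y_j^2\|_\infty=n+1$ are correct (the latter uses exactly the $1$-design property and super-normalization, the same ingredients as the paper), and the hypothesis $m\ge 2n\log n$ enters only to tame the $\tfrac13 L\log(2n)$ term. Carrying out your bookkeeping with $n+1\le\tfrac32 n$ and $\log(2n)\le 2\log n$ for $n\ge 2$ gives roughly $\tfrac{4}{\sqrt 3}\sqrt{n\log(2n)}\approx 2.31\sqrt{n\log(2n)}$, i.e.\ a constant \emph{smaller} than $3.1049$, so the stated bound follows a fortiori. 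Your route is shorter and sharper; the paper's route has the advantage of reusing the same Khintchine-plus-Chernoff scaffolding in the Gaussian case and in the approximate-design generalization, where only the bound on $\|\sum_j\BE X_j\|_\infty$ needs to be modified. The one external ingredient you invoke, the expectation form of matrix Bernstein, is not stated in the paper's appendix but is standard (Tropp) and requires nothing beyond what the paper already assumes.
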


\begin{proof}
Since the $\epsilon_j$'s in the definition of $H$ form a Rademacher sequence, the non-commutative Khintchine inequality \cite[p.~19]{ve12}, 
see also \cite[Exercise~8.6(d)]{fora13},
is applicable and yields
\begin{align}
\BE \| H \|_{\infty} 
&= \BE_a \BE_\epsilon \frac{1}{\sqrt{m}} \left\| \sum_{j=1}^m \epsilon_j a_j a_j^* \right\|_{\infty}
\leq \sqrt{\frac{2\log (2n)}{m}} \BE_a \left\| \left( \sum_{j=1}^m \left( a_j a_j^*\right)^2 \right)^{1/2} \right\|_{\infty} \nonumber\\
&= \sqrt{\frac{2 \log (2n)}{m}} \BE_a \left\| \sqrt{(n+1)n} \sum_{j=1}^m a_j a_j^* \right\|_{\infty}^{1/2}
 \leq  \sqrt{\frac{2 \sqrt{2} n \log (2n)}{m}} \left( \BE_a \left\| \sum_{j=1}^m a_j a_j^* \right\|_{\infty} \right)^{1/2} \label{eq:Haux1}.
\end{align}
Here we have used super-normalization of our design vectors $(a_j a_j^*)^2 = \| a_j \|_2^2 a_j a_j^* = \sqrt{(n+1)n} a_j a_j^*$ 
according to (\ref{eq:design_normalization}),
the fact that $\| Z^{1/2} \|_{\infty} = \| Z \|_{\infty}^{1/2}$ holds for $Z \in \mathcal{H}_d$ arbitrary and Jensen's inequality in the last estimate. 
It remains to bound $\BE \| \sum_{j} a_j a_j^* \|_{\infty}$.
To this end, we will use the matrix Chernoff inequality of Theorem~\ref{thm:matrix:Chernoff} for $X_j = a_j a_j^*$ and calculate
\begin{eqnarray}
\| X_j\|_{\infty}
&=& \| a_j a_j^* \|_{\infty} 
= \| a_j \|_2^2 \leq \max_{1 \leq i \leq N} \|\tilde{w}_i \|_2^2 = \sqrt{(n+1)n} \leq \sqrt{2} n =: R, 
\label{eq:matrix_Chernoff1}  \\
\| \sum_{j=1}^m \BE X_j \|_{\infty}
&=& \| \sum_{j=1}^m  \sum_{i=1}^N   p_i \tilde{w}_i \tilde{w}_i^* \|_{\infty} 
= m \sqrt{n(n+1)} \left\| \sum_{i=1}^N p_i w_i w_i^*  \right\|_\infty\nonumber \\
&=&m \sqrt{(n+1)n} \left\| \frac{1}{n} \id \right\|_\infty = \frac{m \sqrt{(n+1)n}}{n} \leq \sqrt{2}m, 
\label{eq:matrix_Chernoff2}
\end{eqnarray}
where we once more have taken into account super-normalization and used the 1-design property. 
Theorem~\ref{thm:matrix:Chernoff} together with the assumption $m \geq2 n \log n$ implies that, for any $\tau > 0$,
\begin{align*}
\BE \| \sum_{j=1}^m a_j a_j^* \|_{\infty} 
& \leq \frac{e^\tau-1}{\tau} \sqrt{2} m + \tau^{-1} \sqrt{2} n \log(n) \leq
 \frac{e^\tau-1}{\tau} \sqrt{2} m + \tau^{-1} \sqrt{2} m/2\\
 & = \left(\frac{e^{\tau}-1}{\tau} \sqrt{2} + \frac{1}{\sqrt{2}\tau}\right) m.
\end{align*}
The choice $\tau = 1.27$ approximately minimizes the above expression and yields 
\[
\BE \| \sum_{j=1}^m a_j a_j^* \|_{\infty} \leq c_5 m \quad \mbox{ with } c_5 = 3.4084.
\]
Combining this estimate with (\ref{eq:Haux1}) yields the desired statement with $c_4 = 2^{3/4} \sqrt{c_5} = 3.1049$.
\end{proof}

Now we are ready to prove the second main theorem of this work.

\begin{proof}[Proof of Theorem \ref{Th2}] The proof of Theorem~\ref{Th1} shows that we only need suitable bounds for $Q_{2 \xi}$  and  for  $ \BE \Vert H\Vert_{\infty}$  (both notions are defined analogously to the Gaussian case). Fix $0 < \xi < 1/2$ arbitrary. 
For any such $\xi$, a lower bound for  $Q_{2 \xi}$  is provided by Proposition \ref{prop:designs_Q} and an upper bound for $ \BE \Vert H\Vert_{\infty}$ in this case  can be obtained from Proposition \ref{prop:designs_H}. Setting $m= C_4 n r \log n$, choosing  the constants $C_4,C_5$ and $C_6$ appropriately (depending on the particular choice of $\xi$) and applying Theorem~\ref{KMT} then yields the desired result in complete analogy to the Gaussian case (proof of Theorem \ref{Th1}). 
\end{proof}

\begin{remark}
The difference in the sampling rate $m$ by a factor proportional to $\log n$ in Theorems~\ref{Th1} and \ref{Th2} stems from the fact that Proposition~\ref{prop:designs_H} is by a factor of $\sqrt{\log (n)}$ weaker than its Gaussian analogue \cite[Section 5.4.1]{ve12}, where $\BE \| H \|_\infty \leq c_2 \sqrt{n}$. 
\end{remark}

\subsection{Proof of Theorem~\ref{Th1} for real Gaussian vectors} \label{sub:real_Gauss_proof}

As already mentioned in paragraph \ref{subsub:real_Gauss} the proof of this statement is almost identical to the proof of Theorem \ref{Th1}.
The only difference is the estimate of $Q_{2\xi}$. Using the moments of the  real instead of the complex standard Gaussian distribution, 
the reasoning in the proof of Theorem \ref{Th1} yields the estimates $\mathbb E\vert \langle aa^*,U \rangle \vert^2\geq 2$,
(compare also with \cite{tr14}). Using  real moments, one further obtains
$
\mathbb E\vert \langle {aa^*},U \rangle \vert^4\leq 27(\mathbb E\vert \langle aa^*,U \rangle \vert^2)^2
$
(alternatively one can use Gaussian hypercontractivity as done in \cite{tr14}, which gives the factor $81$ instead of $27$.) This yields
$Q_{1}\geq \frac{1}{108}$,
and the rest of the proof is the same as before.

\subsection{Proof for recovery of positive semidefinite matrices}

The only part in the proof of the recovery result for positive semidefinite matrices stated in Section~\ref{subsub:psd} 
that slightly differs from the one for arbitrary Hermitian matrices, is the proof of a corresponding version of Lemma \ref{absch}.
The subdifferential of the function $f$ introduced in \eqref{f:posdef} slightly differs from the subdifferential of the nuclear norm.
For $X=\sum_{i=1}^{r}\lambda_i x_i x_i^* $, where all $\lambda_i$ are nonzero, 
 $\partial f(X)$ consists of all matrices of the form
 $$
 S= \sum_{i=1}^{r} x_i x_i^*+S_2,
  $$ where $S_2\in\CH_n$ has the property that $S_2x_i=0$ for all $i\in \{1,\hdots,r \}$ and all eigenvalues of $S_2$ do not exceed $1$. 
  Hence we choose (in the notation of the proof of Lemma \ref{absch})
  $$S= \sum_{i=1}^{r} x_i x_i^*+ \tau^{-1}A_4\in \partial f(X).$$ Then the remainder of the proof of Lemma~\ref{absch} is the same. 

\subsection{Proof of Theorem~\ref{Th3}}

The proof of this generalized statement proceeds along the same lines as the one of Theorem~\ref{Th2}. 
However, Propositions~\ref{prop:designs_Q} and \ref{prop:designs_H} -- as well as their respective proofs -- have to be slightly altered due to the weaker requirements imposed by Theorem \ref{Th3}. 

\subsubsection{Generalized version of Proposition \ref{prop:designs_Q}}

Under the assumptions of Theorem \ref{Th3}, a weaker version of (\ref{eq:designs_Q}), namely
\begin{equation}
Q_{\xi} = \inf_{Z \in E_r} \BP \left( | \tr \left( a a^* Z \right)| \geq \xi \right) \geq \frac{(1-2\xi^2)^2}{192} \label{eq:generalized_designs_Q}
\end{equation}
for all $0 \leq \xi \leq 1/\sqrt{2}$ is still valid. This statement can be shown analogously to Proposition~\ref{prop:designs_Q}. However, one has to establish bounds on the second and fourth moments in a slightly more involved way, 
depending also on the type of design accuracy. 
Let us start with generalizing the second moment estimate  of $S := | \tr \left( a a^* Z \right) |$ for an approximate $4$-design with operator 
norm accuracy $\theta_\infty \leq 1/(16r^2)$:
\begin{align}
\BE S^2 
&= (n+1) n \left( \sum_{i=1}^N p_i \left( w_i w_i^* \right)^{\otimes 2}, Z^{\otimes 2} \right) \nonumber \\
&= 2 \left( P_{\Sym^2}, Z^{\otimes 2} \right) + (n+1)n \left( \sum_{i=1}^N p_i \left( w_i w_i^* \right)^{\otimes 2} - \binom{n+1}{2}^{-1} P_{\Sym^2}, Z^{\otimes 2} \right) \nonumber \\
& \geq  2 | \left( P_{\Sym^2}, Z^{\otimes 2} \right)| - (n+1)n \left\| \sum_{i=1}^N p_i \left( w_i w_i^* \right)^{\otimes 2} - \binom{n+1}{2}^{-1} P_{\Sym^2} \right\|_{\infty} \left\| Z^{\otimes 2} \right\|_1 \label{eq:generalization_aux1} \\
& \geq  2 | \left( P_{\Sym^2}, Z^{\otimes 2} \right)| - 2 \theta_\infty \| Z \|_1^2 
\geq 2 |\left( P_{\Sym^2}, Z^{\otimes 2} \right)| - \frac{8 r}{16 r^2} , \nonumber \\
& >  2 |\left( P_{\Sym^2}, Z^{\otimes 2} \right)| - 1/2, \label{eq:generalization_aux2}
\end{align}
where we have used the fact that $ \left( P_{\Sym^2}, Z^{\otimes 2} \right)= |\left( P_{\Sym^2}, Z^{\otimes 2} \right)|$ (see Lemma~\ref{lem:Psym}), the matrix H{\"o}lder inequality and the fact that  $\| Z \|_1 \leq 2 \sqrt{r}$ 
-- see (\ref{eq:absch1}).
The estimates for designs with nuclear norm accuracy $\theta_1 \leq 1/4$ is very similar. 
Replacing the matrix H{\"o}lder inequality in (\ref{eq:generalization_aux1}) by
\begin{equation*}
\left( \sum_{i=1}^N p_i \left( w_i w_i^* \right)^{\otimes 2} - \binom{n+1}{2}^{-1} P_{\Sym^2} , Z^{\otimes 2} \right) \geq -\left\| \sum_{i=1}^N p_i \left( w_i w_i^* \right)^{\otimes 2} - \binom{n+1}{2}^{-1} P_{\Sym^2} \right\|_{1} \left\| Z^{\otimes 2} \right\|_\infty
\end{equation*}
yields the same lower bound (\ref{eq:generalization_aux2}) due to $ \| Z^{\otimes 2} \|_\infty = \| Z\|_\infty^2 \leq \| Z \|_2^2 = 1$ (where the last equality follows from $Z \in E_r$). 
Applying Lemma~\ref{lem:Psym} then yields
\begin{equation*}
\BE S^2 
\geq {\tr \left( Z \right)^2} + 1/2 \quad \mbox{ and } \quad  (\BE S^2)^2 \geq \frac{1}{4} \max\{1, \tr(Z)^4\}.
\end{equation*}
which is the (slightly weaker) analogue of (\ref{eq:Qaux34}). Likewise we derive a fourth moment bound:
\begin{align*}
 \BE S^4  
&= \left( \BE \left[ (a a^*)^{\otimes 4} \right] , Z^{\otimes 4} \right) 
= (n+1)^2 n^2 \left( \sum_{i=1}^N p_i \left( w_i w_i^* \right)^{\otimes 4}, Z^{\otimes 4} \right) \\
& \leq  (n+1)^2 n^2 \binom{n+3}{4}^{-1} | \left( P_{\Sym^4}, Z^{\otimes 4} \right) | \\ 
&+ (n+1)^2 n^2 \left\| \sum_{i=1}^N p_i \left( w_i w_i^* \right)^{\otimes 4} - \binom{n+3}{4}^{-1} P_{\Sym^4} \right\|_\infty \left\| Z^{\otimes 4} \right\|_1 \\
& \leq  \frac{4!(n+1)n}{(n+3)(n+2)} \left( | \left( P_{\Sym^4}, Z^{\otimes 4} \right) + \theta_\infty \| Z \|_1^4 \right) 
\leq | 4! \left( P_{\Sym^4}, Z^{\otimes 4} \right) | + 4! \frac{16 r^2}{16 r^2}.
\end{align*}
As above, using the nuclear norm accuracy $\theta_1 \leq 1/4$ instead of the operator norm accuracy yields the bound  
$\BE \left[ S^4 \right] \leq  | 4! \left( P_{\Sym^4}, Z^{\otimes 4} \right) | + 4!/4 <  | 4! \left( P_{\Sym^4}, Z^{\otimes 4} \right) | + 4!$.
Lemma ~\ref{lem:Psym} yields then in both cases
\begin{align*}
\BE \left[ S^4 \right] &\leq | 4! \tr \left( P_{\Sym^4} Z^{\otimes 4} \right) | + 24 \leq 6 \tr (Z^4) + 8 | \tr (Z) \tr (Z^3) | + 6 \tr (Z)^2 + \tr (Z)^4 + 27\\
& \leq 48 \max\{1, \tr(Z)^4\}, 
\end{align*} compare the proof of Proposition \ref{prop:designs_Q}.
Having these bounds at hand, allows for applying the Payley Zygmund inequality to obtain
\begin{align*}
\BP \left( | \tr \left( a a^* Z \right) | \geq \xi \right)
& = \BP \left( S^2 \geq \xi^2 \right) \geq \BP \left( S^2 \geq 2 \xi^2 \left( 1/2 + \tr (Z)^2 \right) \right)
\geq \BP \left( S^2 \geq 2 \xi^2 \BE \left[ S^2 \right] \right) \\
& \geq (1-2 \xi^2)^2 \frac{(\BE S^2)^2}{\BE S^4} \geq (1-2 \xi^2)^2 \frac{\max\{1,\tr(Z)^4\}/4}{48 \max\{1,\tr(Z)^4\}}
= \frac{(1-2\xi^2)^2}{192}.
\end{align*}
The proof is completed.

\subsubsection{Generalized version of Proposition \ref{prop:designs_H}}

The assumptions in Theorem \ref{Th3} assure that (\ref{eq:designs_H}) is still valid, possibly with a larger absolute constant $c_4$. 
Again, the proof of this generalized statement is very similar to the proof of Proposition~\ref{prop:designs_H}. 
Indeed, only the bound  (\ref{eq:matrix_Chernoff2}) for the matrix Chernoff inequality needs to be slightly altered. 
The assumption (\ref{eq:approx_tight_frame}) implies that
\begin{align*}
\| \sum_{j=1}^m \BE \left[ X_j \right] \|_\infty
&\leq m \sqrt{(n+1)n} \left( \| \frac{1}{n} \id \|_\infty + \| \sum_{i=1}^N p_i w_i w_i^* - \frac{1}{n} \id \|_\infty \right) 
 \leq  2 \sqrt{2}m. 
\end{align*}
Consequently, applying the matrix Chernoff inequality yields (\ref{eq:designs_H}) with a slightly larger absolute constant $c_4$.

\section{Appendix}

\subsection{Schatten $p$-norms}
\label{sec:Schatten}

Recall from Section~\ref{sec:lowrankrec} that for  $1 \leq p < \infty$ , the Schatten-$p$-norm on $\CH_n$ is defined as
\begin{equation*}
\| Z \|_p = \tr \left( |Z|^p \right)^{1/p} = \left( \sum_{i=1}^n | \lambda_i |^p \right)^{1/p},
\end{equation*}
where $\lambda_1,\ldots,\lambda_n$ denote the $n$ eigenvalues of $Z \in \CH_n$.
For  $ p= \infty$ one defines similarly 
$$
\| Z \|_{\infty}=  \max\{\vert \lambda_1\vert, \ldots, \vert \lambda_n\vert\},
$$
i.e., $ \| Z \|_{\infty}$ is the spectral norm of $Z$.
The Frobenius norm $\|\cdot\|_F = \|\cdot \|_2$ is induced by the
the Hilbert-Schmitt (or Frobenius) scalar product
\begin{equation*}
\left( X, Y \right) = \tr \left( X Y \right),
\end{equation*}
which makes $\CH_n$ a Hilbert space. 
The Schatten-$p$ norms are non-increasing in $p$, i.e. for any $0 < p \leq p' \leq \infty$
\begin{equation}
\| Z \|_p \geq \| Z \|_{p'} \label{eq:norm_hierarchy}
\end{equation}
holds for all $Z \in \CH_n$. The following relations provide converse inequalities for particular instances of Schatten $p$-norms 
that are used frequently in our work:
\begin{equation}
\| Z \|_1 \leq \sqrt{\rank (Z)} \| Z \|_2 \quad \textrm{and} \quad \| Z \|_2 \leq \sqrt{\rank (Z)} \| Z \|_\infty \quad \mbox{ for all } Z \in \CH_n. \label{eq:norm_equivalences}
\end{equation}
In addition, we often use a particular instance of the matrix H{\"o}lder inequality, namely
\begin{equation}
| \left( X,Y \right) | \leq \| X \|_1 \| Y \|_\infty \quad \mbox{ for all } X,Y \in \CH_n. \label{eq:hoelder}
\end{equation}

\subsection{Matrix Chernoff inequality}

The matrix version of the classical Chernoff inequality for the expection of a sum of independent random matrices 
shown in \cite[Theorem 5.1.1]{tr12-2} (see also \cite{tr12}) reads as follows.

\begin{theorem}\label{thm:matrix:Chernoff} Let $X_1,\hdots,X_m$ be a sequence of independent random positive definite matrices in $\CH_n$ satisfying
\[
\|X_\ell\|_\infty \leq L \quad \mbox{ almost surely for all } \ell=1,\hdots,m.
\]
Then, for any $\tau > 0$, their sum obeys
\[
\BE \| \sum_{\ell=1}^m X_\ell \|_\infty \leq \frac{e^\tau-1}{\tau} \| \sum_{\ell=1}^m \BE X_\ell \|_\infty + \tau^{-1} L \log n.
\]
\end{theorem}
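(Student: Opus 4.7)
The plan is to establish the bound via the matrix Laplace transform method, following the approach developed by Ahlswede--Winter and refined by Tropp. The starting point is the elementary inequality
\[
\BE \lambda_{\max}\!\left(\sum_{\ell=1}^{m} X_\ell\right) \leq \frac{1}{\theta}\, \log \BE \tr \exp\!\left(\theta \sum_{\ell=1}^{m} X_\ell\right),
\]
valid for any $\theta > 0$, obtained by applying Jensen's inequality to the monotone concave function $\log$ together with the pointwise bound $\exp(\theta \lambda_{\max}(Y)) \leq \tr \exp(\theta Y)$ for $Y \in \CH_n$. Since $\sum_\ell X_\ell \succcurlyeq 0$, the left-hand side equals $\BE \|\sum_\ell X_\ell\|_\infty$, so the task reduces to controlling the matrix moment generating function $\BE \tr \exp(\theta \sum_\ell X_\ell)$.

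First, I would invoke Lieb's concavity theorem, which --- applied iteratively across the independent summands $X_1,\ldots,X_m$ --- yields the fundamental subadditivity estimate
\[
\BE \tr \exp\!\left(\theta \sum_{\ell=1}^{m} X_\ell\right) \leq \tr \exp\!\left(\sum_{\ell=1}^{m} \log \BE \exp(\theta X_\ell)\right).
\]
Next, to control each individual log--MGF $\log \BE \exp(\theta X_\ell)$, I would use the one-variable fact that $e^{\theta x} \leq 1 + g(\theta)\, x$ on $[0,L]$ with $g(\theta) = (e^{\theta L}-1)/L$ --- this is just the chord of the convex exponential through the endpoints $x=0$ and $x=L$. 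Since $0 \preceq X_\ell \preceq L\,\id$, the transfer principle for operator functions gives $\exp(\theta X_\ell) \preceq \id + g(\theta) X_\ell$; taking expectations and then applying operator monotonicity of $\log$ together with $\log(\id + A) \preceq A$ for $A \succcurlyeq 0$ yields $\log \BE \exp(\theta X_\ell) \preceq g(\theta) \BE X_\ell$.

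Finally I would assemble: using $\tr \exp(M) \leq n \exp(\lambda_{\max}(M))$ and the monotonicity of $\lambda_{\max}$ with respect to the semidefinite order, the previous two ingredients combine to
\[
\BE \tr \exp\!\left(\theta \sum_\ell X_\ell\right) \leq n \exp\!\left(g(\theta)\, \|\sum_\ell \BE X_\ell\|_\infty\right).
\]
Substituting this into the Laplace-transform bound, dividing by $\theta$, and finally setting $\tau = \theta L$ produces exactly the claimed estimate. The main obstacle is the noncommutativity of the summands, which prevents a direct scalar-style Chernoff argument: one cannot factor $\exp(\sum X_\ell)$ as a product, and even the scalar MGF identity $\BE \exp(\theta \sum X_\ell) = \prod \BE \exp(\theta X_\ell)$ fails. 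Lieb's concavity theorem is the deep matrix-analytic input that bypasses this obstacle by trading independence of the summands for subadditivity of the matrix log--MGF, and everything else reduces to the transfer of a one-variable chord bound plus a one-parameter optimization over $\tau$.
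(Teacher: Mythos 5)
Your proof is correct and complete: the Laplace-transform bound, the Lieb-based subadditivity of the matrix cumulant generating function, the chord bound $e^{\theta x}\leq 1+\frac{e^{\theta L}-1}{L}x$ transferred to the operator setting, and the substitution $\tau=\theta L$ assemble exactly into the stated inequality. The paper does not prove this theorem but cites Tropp's lecture notes, and your argument is precisely the proof given there, so there is nothing to compare beyond noting the agreement.
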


\subsection{Multilinear algebra}

We briefly repeat some standard concepts in multilinear algebra which are convenient for our proof of Proposition \ref{prop:designs_Q}. 
They can be found in any textbook on multilinear algebra -- e.g. \cite{landsberg_tensors_2012} -- but we nonetheless include them here for the sake of being self-contained. 

Let $V_1,\ldots,V_k$ be (finite dimensional, complex) vector spaces and let $V_1^*,\ldots,V_k^*$ denote their duals. A function
$f: V_1 \times \cdots \times V_k \to \BC$
is \emph{multilinear}, if it is linear in each space $V_i$. We denote the space of such functions by $V_1^* \otimes \cdots \otimes V_k^*$ and call it the \emph{tensor product} of $V_1^*,\ldots,V_k^*$. 
Consequently, for one fixed $n$-dimensional vector space $V$, the tensor product $\left(V \right)^{\otimes k} = \bigotimes_{i=1}^k V $ is the space of all multilinear functions
\begin{equation}
f: \underset{k\textrm{ times }}{\underbrace{\left( V\right)^*\times\cdots\times \left(V\right)^*}}\mapsto \BC ,		\label{eq:tensor_vector}
\end{equation}
and we call the elementary elements $z_1 \otimes \cdots \otimes z_k$ the \emph{tensor product} of the vectors $z_1, \ldots, z_k \in V$. 

With this notation, the space of linear maps $V \to V$ ($n \times n$-matrices) corresponds to the tensor product $\CM_n := V \otimes V^*$ which is spanned by $\left\{x \otimes y^*: \; x,y \in V\right\}$
-- the set of all rank-1 matrices. 
Using this tensor product description of $\CM_n$ allows for defining the (matrix) tensor product $\CM_n^{\otimes k}$ in complete analogy to above. 
We refer to its elements $Z_1 \otimes \cdots \otimes Z_k$ as the tensor product of the matrices $Z_1,\ldots,Z_k \in \CM_n$. 

On this tensor space, we define the \emph{partial trace} (over the $i$-th tensor system) to be
the natural contraction
\begin{eqnarray*}
\tr_i:\; \CM_n^{\otimes k} & \to & \CM_n^{\otimes (k-1)} \\
Z_1 \otimes \cdots \otimes Z_k & \mapsto & \tr (Z_i) Z_1 \otimes \cdots \otimes Z_{i-1} \otimes Z_{i+1} \otimes \cdots \otimes Z_{k}.
\end{eqnarray*}
The partial trace over multiple systems can then be obtained by concatenating individual traces of this form, e.g.
\begin{equation}
\tr_{i,j} = \tr_i \circ \tr_j: \CM_n^{\otimes k} \to \CM_n^{\otimes (k - 2)} \label{eq:partial_trace}
\end{equation}
for $1 \leq i < j \leq k$ arbitrary and so forth. 
A particular property of arbitrary partial traces is that they preserve positive semidefiniteness -- see e.g.\ \cite[Section 8.3.1]{nielsen_quantum_2010} or any lecture notes on quantum information theory.
If a matrix $Z \in \CM_n^{\otimes k}$ is positive semidefinite, then $\tr_i \left( Z \right) \in \CM^{\otimes (k-1)}$ is again positive semidefinite for any $1 \leq i \leq k$. 
This behavior naturally extends to multiple partial traces in the sense of (\ref{eq:partial_trace}). 
The \emph{full trace} corresponds to
\begin{eqnarray*}
\tr := \tr_{1,\ldots,k}:\; \CM_n^{\otimes k} & \to & \BC \\
Z_1 \otimes \cdots \otimes Z_k & \mapsto& \tr (Z_1) \cdots \tr (Z_k).
\end{eqnarray*}
This implies that the nuclear norm is multiplicative with respect to the tensor structure, i.e.,
\begin{equation}
\| Z_1 \otimes \cdots Z_k \|_1 = \tr \left( |Z_1| \otimes \cdots \otimes |Z_k| \right) = \tr \left( |Z_1| \right) \cdots \tr \left( |Z_k | \right) = \| Z_1 \|_1 \cdots \| Z_k \|_1 \label{eq:tensor_trace_norm}
\end{equation}
for $Z_1,\ldots,Z_k \in \CM$ arbitrary. A singular value decomposition -- see e.g. \cite[Lecture 2]{watrous_lecture_2011} -- reveals that the same is true for the operator norm, i.e.
\begin{equation}
\left\| Z_1 \otimes \cdots \otimes Z_k \right\|_\infty = \| Z_1 \|_\infty \cdots \| Z_k \|_\infty \label{eq:tensor_operator_norm}.
\end{equation}

Let us now return to the $k$-fold tensor space $V^{\otimes k}$ of $n$-dimensional complex vectors. 
We define the (symmetrizer) map 
$P_{\Sym^k}:  \left(V \right)^{\otimes k}  \to  \left( V \right)^{\otimes k} $
via their action on elementary elements:
\begin{equation}
P_{\Sym^k} \left( z_1 \otimes \cdots \otimes z_k \right) := \frac{1}{k!} \sum_{\pi \in S_k} z_{\pi (1)} \otimes \cdots \otimes z_{\pi(k)}, \label{eq:symmetrizer}
\end{equation}
where $S_k$ denotes the group of permutations of $k$ elements.
This map projects $ \left(V \right)^{\otimes k}$ onto the totally symmetric subspace $\Sym^k$ of $\left(V \right)^{\otimes k}$ whose dimension \cite[Exercise 2.6.3.5]{landsberg_tensors_2012} is
\begin{equation}
\dim \Sym^k = \binom{n+k-1}{k}		\label{eq:dimsym}.
\end{equation}

Using these basic concepts of multilinear algebra and (\ref{eq:PSym_Haar}), we can show that every approximate $t$-design is also an approximate design of lower order.

\begin{lemma} \label{lem:design_hierarchy}
Every approximate $t$-design of accuracy measured either in operator- or trace-norm is also an approximate $k$-design of the same accuracy for any $1 \leq k \leq t$. 
Furthermore the accuracies $\theta_\infty$ and $\theta_1$ are related via
\begin{equation}
\theta_\infty \leq \theta_1 \leq n^t \theta_\infty. \label{eq:accuracy_relations}
\end{equation}
\end{lemma}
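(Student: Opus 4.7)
The plan is to use the partial trace operation as the bridge between level-$t$ and level-$k$ moments. Writing
\[
\Delta^{(\ell)} := \sum_{i=1}^N p_i (w_iw_i^*)^{\otimes \ell} - \int_{\BC P^{n-1}}(ww^*)^{\otimes \ell}\,\mathrm{d}w,
\]
the identity $\tr(ww^*)=1$ on normalized vectors immediately yields $\tr_{k+1,\ldots,t}(\Delta^{(t)}) = \Delta^{(k)}$, because the partial trace of $(ww^*)^{\otimes t}$ collapses onto $(ww^*)^{\otimes k}$. I will handle the trace-norm descent first (easy), then the operator-norm descent (the interesting case), and finally the accuracy comparison.

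For the trace norm, partial trace is completely positive and trace preserving, hence a contraction on the trace norm of Hermitian matrices (split $M = M_+-M_-$ with $M_\pm\succcurlyeq 0$ to see $\|\tr_{k+1,\ldots,t}(M)\|_1 \leq \tr M_+ + \tr M_- = \|M\|_1$). Therefore $\|\Delta^{(k)}\|_1 \leq \|\Delta^{(t)}\|_1 \leq \binom{n+t-1}{t}^{-1}\theta_1$, and the monotonicity $\binom{n+k-1}{k} \leq \binom{n+t-1}{t}$ (from $(n+j)/(j+1)\geq 1$) upgrades this to the required $\binom{n+k-1}{k}^{-1}\theta_1$.

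The hard part is the operator-norm descent, because partial trace is \emph{not} contractive in operator norm (the naive bound loses a factor of $n^{t-k}$, far too much). The key identity is
\[
\tr_{k+1,\ldots,t}(P_{\Sym^t}) = \binom{n+t-1}{t}\int_{\BC P^{n-1}}(ww^*)^{\otimes k}\,\mathrm{d}w = c\, P_{\Sym^k},\qquad c := \frac{\binom{n+t-1}{t}}{\binom{n+k-1}{k}},
\]
obtained by combining \eqref{eq:PSym_Haar} with $\|w\|^{2(t-k)}=1$. Given this, for any unit vector $\psi\in(\BC^n)^{\otimes k}$ and any Hermitian $M$ supported on $\Sym^t$, the identity $M = P_{\Sym^t}MP_{\Sym^t}$ rewrites
\[
\psi^*\tr_{k+1,\ldots,t}(M)\,\psi = \tr(A_\psi M),\quad A_\psi := P_{\Sym^t}\bigl(\psi\psi^*\otimes\id^{\otimes(t-k)}\bigr)P_{\Sym^t}\ \succcurlyeq 0,
\]
and a direct trace computation gives $\|A_\psi\|_1 = \tr A_\psi = c\,\psi^* P_{\Sym^k}\psi \leq c$. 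The matrix H\"older inequality \eqref{eq:hoelder} then yields $|\tr(A_\psi M)| \leq c\,\|M\|_\infty$; taking the supremum over $\psi$ produces $\|\Delta^{(k)}\|_\infty \leq c\,\|\Delta^{(t)}\|_\infty$, and the constant $c$ exactly cancels the $\binom{n+t-1}{t}^{-1}$ in the hypothesis to deliver $\|\Delta^{(k)}\|_\infty \leq \binom{n+k-1}{k}^{-1}\theta_\infty$.

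Finally, the accuracy comparison is immediate: $\theta_\infty \leq \theta_1$ follows from $\|\Delta^{(t)}\|_\infty \leq \|\Delta^{(t)}\|_1$, and $\theta_1 \leq n^t\theta_\infty$ from $\|\Delta^{(t)}\|_1 \leq \rank(\Delta^{(t)})\|\Delta^{(t)}\|_\infty \leq n^t\|\Delta^{(t)}\|_\infty$, combining \eqref{eq:norm_equivalences} with the trivial fact that $\Delta^{(t)}$ sits in an $n^t$-dimensional matrix space.
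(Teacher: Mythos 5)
Your proof is correct. The trace-norm descent (partial trace is a trace-norm contraction on Hermitian matrices) and the comparison $\theta_\infty \leq \theta_1 \leq n^t \theta_\infty$ coincide with the paper's argument; in fact you are slightly more careful than the paper in the trace-norm case, where you explicitly upgrade $\binom{n+t-1}{t}^{-1}\theta_1$ to the required $\binom{n+k-1}{k}^{-1}\theta_1$ via monotonicity of the binomial coefficients. Where you genuinely diverge is the operator-norm descent. The paper recasts operator-norm accuracy as the two-sided positive semidefinite sandwich $(1-\theta_\infty)\,\mu_t \preccurlyeq \sum_i p_i (w_iw_i^*)^{\otimes t} \preccurlyeq (1+\theta_\infty)\,\mu_t$, where $\mu_t$ is the Haar average (this equivalence uses that both sides are supported on $\Sym^t$ and that $\mu_t$ is a scalar multiple of $P_{\Sym^t}$); it then pushes the ordering through the partial trace, which preserves positive semidefiniteness, and reads off the level-$k$ sandwich. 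You instead dualize: each quadratic form $\psi^*\Delta^{(k)}\psi$ becomes $\tr(A_\psi \Delta^{(t)})$ with $A_\psi = P_{\Sym^t}(\psi\psi^*\otimes \id^{\otimes(t-k)})P_{\Sym^t} \succcurlyeq 0$ and $\tr A_\psi \leq c = \binom{n+t-1}{t}/\binom{n+k-1}{k}$, and H\"older gives $\|\Delta^{(k)}\|_\infty \leq c\,\|\Delta^{(t)}\|_\infty$, with $c$ exactly cancelling the normalization. Both routes rest on the same two facts --- that $\Delta^{(t)}$ is supported on $\Sym^t$ and that $\tr_{k+1,\ldots,t} P_{\Sym^t} = c\, P_{\Sym^k}$ --- but yours sidesteps the equivalence between the norm bound and the sandwich and makes the bookkeeping of binomial factors fully explicit, at the cost of introducing the auxiliary test matrices $A_\psi$; the paper's version is shorter once that equivalence is granted.
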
 

This statement is implicitly proved in \cite{ambainis_quantum_2007}, where the authors use an equivalent definition of approximate $t$-designs as averaging sets of complex polynomials of degree at most $(t,t)$. With this alternative definition, Lemma \ref{lem:design_hierarchy} follows naturally from the fact that every polynomial of degree at most $(k,k)$ with $1 \leq k \leq t$ is a particular instance of a degree-$(t,t)$-polynomial. 
Here we provide an alternative proof that uses concepts from multilinear algebra and accesses Definition~\ref{def:approx_designs} directly. 
Such a proof idea is mentioned in \cite[Section 2.2.3]{low_pseudo_2010} and we include the full argument here for the sake of being self-contained. 

\begin{proof}[Proof of Lemma \ref{lem:design_hierarchy}]
Let us start with proving the statement for the accuracy 
measured in operator norm. 
In this case, Definition \ref{def:approx_designs} is equivalent to demanding
\begin{equation}
(1- \theta_\infty) \int_{\BC P^{n-1}} \left( w w^* \right)^{\otimes t} \mathrm{d} w \leq \sum_{i=1}^N p_i \left( w_i w_i^* \right)^{\otimes t} \leq (1+ \theta_\infty) \int_{\BC P^{n-1}} \left( w w^* \right)^{\otimes t} \mathrm{d} w. \label{eq:design_hierarchy_aux1}
\end{equation}
The desired statement follows if we can show that (\ref{eq:design_hierarchy_aux1}) implies a corresponding inequality for smaller tensor powers $k$.
Fix $1 \leq k \leq t$ and note that the inequality chain (\ref{eq:design_hierarchy_aux1}) is preserved under taking arbitrary partial traces, because partial traces respect the positive semidefinite ordering. 
This in particular implies that 
\begin{eqnarray*}
(1-\theta_\infty)  \int_{\BC P^{n-1}}\tr_{1,\ldots,(t-k)} \left( \left( w w^*\right)^{\otimes t} \right) \mathrm{d}w &\leq& \sum_{i=1}^N p_i \tr_{1,\ldots,(t-k)} \left( \left( w_i w_i^* \right)^{\otimes t} \right) \\
&\leq& (1+ \theta_\infty) \int_{\BC P^{n-1}} \tr_{1,\ldots,(t-k)} \left( \left( w w^*\right)^{\otimes t} \right) \mathrm{d}w
\end{eqnarray*}
remains valid. 
Due to normalization $\|w_i \|_{\ell_2} =1$ and and since we calculate the integrals using preimages of the $w \in \BC P^{n-1}$  in the unit sphere,
these expressions can be readily calculated. Indeed,
 $$
 \tr_{1,\ldots,(t-k)} \left( \left( w_i w_i^*\right)^{\otimes t} \right) = \left( w_i w_i^* \right)^{\otimes k}  |\langle w_i, w_i \rangle |^{2(t-k)}\\
=  \left( w_i w_i^* \right)^{\otimes k} $$  
and
$$ \int_{\BC P^{n-1}} \tr_{1,\ldots,(t-k)} \left( \left( w w^*\right)^{\otimes t} \right) \mathrm{d}w 
= \int_{\BC P^{n-1}} \left( w w^* \right)^{\otimes k} | \langle w, w \rangle |^{2(t-k)} \mathrm{d}w
= \int_{\BC P ^{n-1}} \left( w w^* \right)^{\otimes k} \mathrm{d}w.$$
The desired statement follows.

The analogous statement for accuracy measured in trace-norm directly follows from the fact that the nuclear norm is monotonic with respect to partial traces, i.e., 
$ \| \tr_i (Z) \|_1 \leq \| Z \|_1$ for any $Z \in \CM_n^{\otimes t}$ and $1 \leq i \leq t$ \cite[Lecture 2]{watrous_lecture_2011}. Combining this with the calculations above reveals that
\begin{align*}
&\left\| \sum_{i=1}^N p_i \left( w_i w_i^* \right)^{\otimes k} - \int_{\BC P^{n-1}} \left( w w^* \right)^{\otimes k} \mathrm{d} w \right\|_1\\
&=\left\| \tr_{1,\ldots,t-k} \left( \sum_{i=1}^N p_i \left( w_i w_i^* \right)^{\otimes t} - \int_{\BC P^{n-1}} \left( w w^* \right)^{\otimes t} \mathrm{d} w \right) \right\|_1 \\
& \leq  \left\| \sum_{i=1}^N p_i \left( w_i w_i^* \right)^{\otimes t} - \int_{\BC P^{n-1}} \left( w w^* \right)^{\otimes t} \mathrm{d} w  \right\|_1 
\leq \theta_1.
\end{align*}
Finally, inequality (\ref{eq:accuracy_relations}) directly follows from comparing trace  and operator norm on $\CM_n^{\otimes t}$ which is isomorphic to the space of all $n^t \times n^t$-dimensional matrices.

\end{proof}

\subsection{Wiring calculus in multilinear algebra}

The defining properties (\ref{eq:exact_designs}), (\ref{eq:approx_designs}) of exact and approximate complex projective $t$-designs are phrased in terms of tensor spaces.
For calculations in multilinear algebra -- particularly if they involve (partial) traces-- \emph{wiring diagrams} \cite[Chapter 2.11]{landsberg_tensors_2012} are very useful,
as they provide a way of computing contractions of tensors pictorially.
Here we give a brief introduction that should suffice for our calculations and defer the interested reader to \cite{grkrku13} and references therein for further reading.

In wiring calculus, every tensor is associated with a box, and every index corresponds to a line emanating from this box. 
Two connected lines correspond to connected indices. 
The formalism becomes much clearer when applying it to matrix calculus.  A matrix $Z : \BC^n \to \BC^n$ can be viewed as two-index-tensors ${Z^i}_j$
and is thus represented by a node 
$
\tikz[heighttwo,xscale=.5,baseline]{
\coordinate(up)at(0,0.8);
\coordinate(mid)at(0,0.5);
\coordinate(down)at(0,0.2);
\draw(down)to(up);
\draw(mid)node[vector]{$Z$};
}
$
with upper line corresponding to the index $i$ and the lower one to $j$. 
Two matrices $Y,Z$ are multiplied by contracting $Z$'s upper index with $Y$'s lower one:
\begin{equation*}
{(Y Z )^i}_j = \sum_{k=1}^n {Y^i}_k {Z^k}_j.
\end{equation*}
In wiring calculus matrix multiplication is therefore represented by
\begin{equation*}
	YZ
	=
	\tikz[heighttwo,xscale=.5,baseline]{
	\coordinate(up)at(0,01);
	\coordinate(mid1)at(0,0.73);
	\coordinate(mid2)at(0,0.27);
	\coordinate(down)at(0,0);
	\draw(down)to(up);
	\draw(mid1)node[vector]{$Y$};
	\draw(mid2)node[vector]{$Z$};
	}.
\end{equation*}
Tensor products of matrices are arranged in parallel, i.e.,
\begin{equation*}
	Y \otimes Z
	=
	\tikz[heighttwo,xscale=.5,baseline]{
	\coordinate(up)at(0,0.9);
	\coordinate(upr)at(1,0.9);
	\coordinate(mid)at(0,0.5);
	\coordinate(midr)at(1,0.5);
	\coordinate(down)at(0,0.1);
	\coordinate(downr)at(1,0.1);
	\draw(down)to(up);
	\draw(downr)to(upr);
	\draw(mid)node[vector]{$Y$};
	\draw(midr)node[vector]{$Z$};
	}.
\end{equation*}
Taking traces of tensor products, e.g.,
\begin{equation*}
Y \otimes Z \mapsto \tr ( Y \otimes Z ) = \sum_{i,j=1}^n {Y^i}_i {Z^j}_j
\end{equation*}
just corresponds to contracting parallel matrix indices and therefore
\begin{equation*}
	\tr (Y \otimes Z) 
=
	\tikz[heighttwo,xscale=.5,baseline]{
	\coordinate(up)at(0,0.9);
	\coordinate(upr)at(1,0.9);
	\coordinate(mid)at(0,0.5);
	\coordinate(midr)at(1,0.5);
	\coordinate(down)at(0,0.1);
	\coordinate(downr)at(1,0.1);
	\coordinate(leftup)at(-0.5,0.9);
	\coordinate(leftdown)at(-0.5,0.1);
	\coordinate(rightup)at(1.5,0.9);
	\coordinate(rightdown)at(1.5,0.1);
	\draw(down)to(up);
	\draw(downr)to(upr);
	\draw(leftdown)to(leftup);
	\draw(rightdown)to(rightup);
	\draw(mid)node[vector]{$Y$};
	\draw(midr)node[vector]{$Z$};
	\draw(leftup)to[out=90,in=90](up);
	\draw(leftdown)to[out=-90,in=-90](down);
	\draw(upr)to[out=90,in=90](rightup);
	\draw(downr)to[out=-90,in=-90](rightdown);
	},
\end{equation*}
which straightforwardly extends to larger (and smaller, namely 
$
	\tr (Z) 
	= 
	\tikz[heighttwo,xscale=.5,baseline]{
	\coordinate(up)at(0,0.8);
	\coordinate(trup)at(0.5,0.8);
	\coordinate(mid)at(0,0.5);
	\coordinate(down)at(0,0.2);
	\coordinate(trdown)at(0.5,0.2);
	\draw(down)to(up);
	\draw(trdown)to(trup);
	\draw(up)to[out=90,in=90](trup);
	\draw(down)to[out=-90,in=-90](trdown);
	\draw(mid)node[vector]{$Z$};
	}
$) tensor systems. 

Finally, we are going to require \emph{transpositions} on $\left( \BC^n \right)^{\otimes t}$ which act by interchanging the $i$-th and $j$-th tensor factor.
For example
\begin{equation*}
\sigma_{(1,2)} \left( x \otimes y \otimes \cdots \right) = y \otimes x \otimes \cdots,
\end{equation*}
with $x,y \in \BC^n$ arbitrary. Note that these transpositions generate the full group of permutations. For $\left( \BC^n \right)^{\otimes 2}$ there are only two transpositions, namely
\begin{equation*}
\underline{1} = 
\tikz[heighttwo,xscale=.5,baseline]{
\coordinate(up)at(0,0.9);
\coordinate(upr)at(1,0.9);
\coordinate(mid)at(0,0.5);
\coordinate(midr)at(1,0.5);
\coordinate(down)at(0,0.1);
\coordinate(downr)at(1,0.1);
\draw(down)to(up);
\draw(downr)to(upr);
}
\;\textrm{(trivial permutation)}
\quad
\textrm{and}
\quad
\sigma_{(1,2)}
=
\tikz[heighttwo,xscale=.5,baseline]{
\coordinate(up)at(0,0.9);
\coordinate(upr)at(1,0.9);
\coordinate(mid)at(0,0.5);
\coordinate(midr)at(1,0.5);
\coordinate(down)at(0,0.1);
\coordinate(downr)at(1,0.1);
\draw(down)to[wavyup](upr);
\draw(downr)to[wavyup](up);
}.
\end{equation*} 
But for higher tensor systems more permutations can occur.
In wiring calculus, permutations therefore act by interchanging different input and output lines.

We are now ready to prove the statements required in Proposition \ref{prop:designs_Q}. 

\begin{lemma}\label{lem:Psym}
For an abritrary Hermitian matrix $Z \in \mathcal{H}_n$ and a positive integer $m$, it holds
\begin{equation*}
m! \left( P_{\Sym^m} Z^{\otimes m} \right)=\sum_{(j_1,...,j_m)\in\BN_0^m\atop  \sum_{k=1}^m k j_k=m} \frac{m!}{\prod_{k=1}^m j_k!\; k^{j_k}}\prod_{k=1}^m\tr(Z^k)^{j_k}.
\end{equation*}
In particular, for $m=2$ we obtain
\begin{equation*}
2 \tr \left( P_{\Sym^2} Z^{\otimes 2} \right) = \tr (Z)^2 + \tr (Z^2), 
\end{equation*}
and for $m=4$ we obtain
\begin{equation*}
4! \; \tr \left( P_{\Sym^4} Z^{\otimes 4} \right) = \tr (Z)^4 + 8 \tr (Z) \tr (Z^3) + 3 \tr (Z^2)^2 + 6 \tr (Z)^2 \tr (Z^2) + 6 \tr (Z^4).
\end{equation*}

\end{lemma}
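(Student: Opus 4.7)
The plan is to expand the symmetrizer in its defining form $P_{\Sym^m}=\frac{1}{m!}\sum_{\pi\in S_m}\sigma_\pi$, where $\sigma_\pi$ denotes the unitary on $(\BC^n)^{\otimes m}$ that permutes tensor factors according to $\pi\in S_m$. This immediately reduces the identity to evaluating the sum
\[
m!\,\tr(P_{\Sym^m}Z^{\otimes m}) \;=\; \sum_{\pi\in S_m}\tr\!\bigl(\sigma_\pi\,Z^{\otimes m}\bigr),
\]
so the whole proof boils down to computing $\tr(\sigma_\pi Z^{\otimes m})$ as a function of the conjugacy class of $\pi$.

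The key ingredient is the cycle--trace identity: if $\pi$ has disjoint cycles of lengths $\ell_1,\dots,\ell_s$, then $\tr(\sigma_\pi Z^{\otimes m})=\prod_{i=1}^s\tr(Z^{\ell_i})$. I would prove it either by the wiring calculus recalled in the appendix---wiring $\sigma_\pi$ together with $Z^{\otimes m}$ and closing up all $m$ output legs to their corresponding input legs produces, for every cycle $(i_1\,i_2\,\ldots\,i_\ell)$ of $\pi$, a closed loop carrying $\ell$ copies of $Z$ composed cyclically and hence contributing a factor $\tr(Z^{\ell})$---or by the one-line index computation $\sum_{k_1,\ldots,k_m}(Z)^{k_1}{}_{k_{\pi(1)}}\cdots(Z)^{k_m}{}_{k_{\pi(m)}}$, which manifestly factorises along the orbits of $\pi$. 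This is the one genuinely computational step and the only potential obstacle, but it is classical.

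Granted this identity, I would group the permutations by cycle type. A cycle type is a tuple $(j_1,\ldots,j_m)\in\BN_0^m$ with $\sum_{k=1}^m kj_k=m$, where $j_k$ records the number of cycles of length $k$. The standard formula from the theory of the symmetric group (the size of a conjugacy class) says that exactly $\frac{m!}{\prod_{k=1}^m j_k!\,k^{j_k}}$ elements of $S_m$ have that cycle type, and each such $\pi$ contributes $\prod_{k=1}^m\tr(Z^k)^{j_k}$ to the sum. Summing over all admissible tuples produces the stated general formula.

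Finally, the specialisations for $m=2$ and $m=4$ follow by enumerating partitions. For $m=2$ there are two cycle types, $(2,0)$ and $(0,1)$, giving $\tr(Z)^2+\tr(Z^2)$. For $m=4$ the five partitions $(4,0,0,0)$, $(2,1,0,0)$, $(0,2,0,0)$, $(1,0,1,0)$, $(0,0,0,1)$ contribute, respectively, coefficients $1,\,6,\,3,\,8,\,6$ in front of $\tr(Z)^4,\,\tr(Z)^2\tr(Z^2),\,\tr(Z^2)^2,\,\tr(Z)\tr(Z^3),\,\tr(Z^4)$, matching the claimed expression. Everything past the cycle--trace identity is routine combinatorics, so I do not anticipate any serious difficulty.
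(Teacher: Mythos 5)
Your proposal is correct and follows essentially the same route as the paper: expand the symmetrizer as an average over $S_m$, evaluate $\tr(\sigma_\pi Z^{\otimes m})$ cycle by cycle (the paper does this via the wiring calculus you also invoke), and then count permutations of each cycle type with the standard formula $\frac{m!}{\prod_k j_k!\,k^{j_k}}$. Your explicit coefficient checks for $m=2$ and $m=4$ all match the stated formulas.
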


\begin{proof}
We start with the case $m=2$ and then extend the argument to the general case.

\noindent The basic formula for $P_{\Sym^2}$ is given by
\begin{equation*}
P_{\Sym^2} = \frac{1}{2} \sum_{\pi \in S_2} \pi = \frac{1}{2} \left( \underline{1} + \sigma_{(1,2)} \right),
\end{equation*}
and its pictorial counterpart is therefore
\begin{equation*}
\tikz[heighttwo,xscale=.5,baseline]{
\coordinate(up)at(0,0.9);
\coordinate(upr)at(1,0.9);
\coordinate(mid)at(0,0.5);
\coordinate(midr)at(1,0.5);
\coordinate(down)at(0,0.1);
\coordinate(downr)at(1,0.1);
\draw(down)to(up);
\draw(downr)to(upr);
\draw[draw=black,fill=gray!10](-0.3,.25)rectangle(1.3,.75);\node[basiclabel]at(0.51,.48){$P_{\Sym^2}$};
}
=
\frac{1}{2} \left(
\tikz[heighttwo,xscale=.5,baseline]{
\coordinate(up)at(0,0.9);
\coordinate(upr)at(1,0.9);
\coordinate(mid)at(0,0.5);
\coordinate(midr)at(1,0.5);
\coordinate(down)at(0,0.1);
\coordinate(downr)at(1,0.1);
\draw(down)to(up);
\draw(downr)to(upr);
}
+
\tikz[heighttwo,xscale=.5,baseline]{
\coordinate(up)at(0,0.9);
\coordinate(upr)at(1,0.9);
\coordinate(mid)at(0,0.5);
\coordinate(midr)at(1,0.5);
\coordinate(down)at(0,0.1);
\coordinate(downr)at(1,0.1);
\draw(down)to[wavyup](upr);
\draw(downr)to[wavyup](up);
}
\right).
\end{equation*}
Applying the graphical calculus introduced above then yields
\begin{eqnarray*}
2 \tr \left( P_{\Sym^2} Z^{\otimes 2}\right)
&=& 
2 \;\tikz[heighttwo,xscale=.5,baseline]{
	\coordinate(top)at(0,1.3){};	\coordinate(topr)at(1,1.3){};
	\coordinate(mid1)at(0,1){};	\coordinate(mid1r)at(1,1){};
	\coordinate(mid2)at(0,0.5){};	\coordinate(mid2r)at(1,0.5){};
	\coordinate(bot)at(0,0){};		\coordinate(botr)at(1,0){};
	\coordinate(toptr)at(1.5,1.3){};	\coordinate(bottr)at(1.5,0){};
	\coordinate(ltoptr)at(-0.5,1.3){};	\coordinate(lbottr)at(-0.5,0){};
	\draw(bot)to(top);
	\draw(botr)to(topr);
	\draw(mid1)node[vector]{$Z$};
	\draw(mid1r)node[vector]{$Z$};
	\draw[draw=black,fill=gray!10](-0.3,.25)rectangle(1.3,.75);\node[basiclabel]at(0.51,.48){$P_{\Sym^2}$};
	\draw(bottr)to(toptr);
	\draw(topr)to[out=90,in=90](toptr);
	\draw(botr)to[out=-90,in=-90](bottr);
	\coordinate(ltoptr)at(-0.5,1.3){};	\coordinate(lbottr)at(-0.5,0){};
	\draw(top)to[out=90,in=90](ltoptr);
	\draw(bot)to[out=-90,in=-90](lbottr);
	\draw(lbottr)to(ltoptr);
}
=
\tikz[heighttwo,xscale=.5,baseline]{
	\coordinate(top)at(0,1.3){};	\coordinate(topr)at(1,1.3){};
	\coordinate(mid1)at(0,1){};	\coordinate(mid1r)at(1,1){};
	\coordinate(mid2)at(0,0.5){};	\coordinate(mid2r)at(1,0.5){};
	\coordinate(bot)at(0,0){};		\coordinate(botr)at(1,0){};
	\coordinate(toptr)at(1.5,1.3){};	\coordinate(bottr)at(1.5,0){};
	\coordinate(in)at(0,1.4){};
	\coordinate(out)at(0,-0.1){};
	\draw(bot)to(top);
	\draw(botr)to(topr);
	\draw(mid1)node[vector]{$Z$};
	\draw(mid1r)node[vector]{$Z$};
	\draw(bottr)to(toptr);
	\draw(topr)to[out=90,in=90](toptr);
	\draw(botr)to[out=-90,in=-90](bottr);
	\coordinate(ltoptr)at(-0.5,1.3){};	\coordinate(lbottr)at(-0.5,0){};
	\draw(top)to[out=90,in=90](ltoptr);
	\draw(bot)to[out=-90,in=-90](lbottr);
	\draw(lbottr)to(ltoptr);
}
+
\tikz[heighttwo,xscale=.5,baseline]{
	\coordinate(top)at(0,1.3){};	\coordinate(topr)at(1,1.3){};
	\coordinate(mid1)at(0,1){};	\coordinate(mid1r)at(1,1){};
	\coordinate(mid2)at(0,0.5){};	\coordinate(mid2r)at(1,0.5){};
	\coordinate(bot)at(0,0){};		\coordinate(botr)at(1,0){};
	\coordinate(toptr)at(1.5,1.3){};	\coordinate(bottr)at(1.5,0){};
	\coordinate(in)at(0,1.4){};
	\coordinate(out)at(0,-0.1){};
	\draw(bot)to[wavyup](mid1r);
	\draw(botr)to[wavyup](mid1);
	\draw(bottr)to(toptr);
	\draw(mid1r)to(topr);
	\draw(topr)to[out=90,in=90](toptr);
	\draw(botr)to[out=-90,in=-90](bottr);
	\draw(mid1)to(top);
	\draw(mid1)node[vector]{$Z$};
	\draw(mid1r)node[vector]{$Z$};
	\coordinate(ltoptr)at(-0.5,1.3){};	\coordinate(lbottr)at(-0.5,0){};
	\draw(top)to[out=90,in=90](ltoptr);
	\draw(bot)to[out=-90,in=-90](lbottr);
	\draw(lbottr)to(ltoptr);
}
=
\tikz[heighttwo,xscale=.5,baseline]{
	\coordinate(top)at(0,1.3){};	\coordinate(topr)at(1,1.3){};
	\coordinate(mid1)at(0,1){};	\coordinate(mid1r)at(1,1){};
	\coordinate(mid2)at(0,0.5){};	\coordinate(mid2r)at(1,0.5){};
	\coordinate(bot)at(0,0){};		\coordinate(botr)at(1,0){};
	\coordinate(toptr)at(1.5,1.3){};	\coordinate(bottr)at(1.5,0){};
	\coordinate(in)at(0,1.4){};
	\coordinate(out)at(0,-0.1){};
	\draw(bot)to(top);
	\draw(botr)to(topr);
	\draw(mid1)node[vector]{$Z$};
	\draw(mid1r)node[vector]{$Z$};
	\draw(bottr)to(toptr);
	\draw(topr)to[out=90,in=90](toptr);
	\draw(botr)to[out=-90,in=-90](bottr);
	\coordinate(ltoptr)at(-0.5,1.3){};	\coordinate(lbottr)at(-0.5,0){};
	\draw(top)to[out=90,in=90](ltoptr);
	\draw(bot)to[out=-90,in=-90](lbottr);
	\draw(lbottr)to(ltoptr);
}
+
\tikz[heighttwo,xscale=.5,baseline]{
\coordinate(up)at(0,1.3);
\coordinate(mid1)at(0,1);
\coordinate(mid2)at(0,0.3);
\coordinate(down)at(0,0);
\draw(down)to(up);
\draw(mid1)node[vector]{$Z$};
\draw(mid2)node[vector]{$Z$};
	\coordinate(ltoptr)at(-0.5,1.3){};	\coordinate(lbottr)at(-0.5,0){};
	\draw(top)to[out=90,in=90](ltoptr);
	\draw(bot)to[out=-90,in=-90](lbottr);
	\draw(lbottr)to(ltoptr);
}
	\\
 &=& 
\tr (Z)^2 + \tr (Z^2),
\end{eqnarray*}
which is the desired statement for $m=2$. 

Expanding $m! \left( P_{\Sym^m} Z^{\otimes m} \right)$ analogously  in the general case, we obtain for each $\pi \in S_m$ one summand which corresponds to a wiring diagram in which $m$ copies of the node $
\tikz[heighttwo,xscale=.5,baseline]{
\coordinate(up)at(0,0.8);
\coordinate(mid)at(0,0.5);
\coordinate(down)at(0,0.2);
\draw(down)to(up);
\draw(mid)node[vector]{$Z$};
}
$ are involved. More precisely, the wiring diagram corresponding to $\pi$ is obtained by connecting  for each $i\in \{1,\hdots,m\}$ the output line  of the $i$-th copy  of  $
\tikz[heighttwo,xscale=.5,baseline]{
\coordinate(up)at(0,0.8);
\coordinate(mid)at(0,0.5);
\coordinate(down)at(0,0.2);
\draw(down)to(up);
\draw(mid)node[vector]{$Z$};
}$  to the input line of the $\pi(i)$-th copy of $
\tikz[heighttwo,xscale=.5,baseline]{
\coordinate(up)at(0,0.8);
\coordinate(mid)at(0,0.5);
\coordinate(down)at(0,0.2);
\draw(down)to(up);
\draw(mid)node[vector]{$Z$};
}$. If we write $\pi$ as a product of $k$ cyclic permutations, $\pi=c_1\cdots c_k$, then the wiring diagram of $\pi$ consists of $k$ closed loops, one for each of the cyclic permutations $c_1,\hdots,c_k$. Write  $c_i=(i_1,\hdots,i_{r_i})$. Then the  loop corresponding to $c_i$ connects  $r_i$ copies of  $
\tikz[heighttwo,xscale=.5,baseline]{
\coordinate(up)at(0,0.8);
\coordinate(mid)at(0,0.5);
\coordinate(down)at(0,0.2);
\draw(down)to(up);
\draw(mid)node[vector]{$Z$};
}$. Hence the contribution of $\pi$ to the whole sum is $\tr(Z^{r_1})\cdots \tr(Z^{r_k})$. 
Thus for a given partition $m=r_1+\hdots+r_k$ of $m$, any element of $S_m$ which is the product of $k$ cyclic (and disjoint) permutations  of lengths $r_1,\hdots,r_k$ respectively  gives the same contribution  $\tr(Z^{r_1})\cdots \tr(Z^{r_k})$. 

Note that we can rewrite any partition of $m$ in the form $m=j_1\cdot 1+\hdots+j_m\cdot m$, where $j_i$ counts how often the summand $i$ appears in that partition. 
It remains to count for each partition  $m=j_1\cdot 1+\hdots+j_m\cdot m$ of $m$ how many elements of $S_m$  there are which are a product of precisely $j_1$ cyclic permutations of length $1$, of precisely  $j_2$ cyclic permutations of length $2$ and so on (all the cyclic permutations being disjoint).  It is easy to see (and well known, see for example \cite[Proposition 1.3.2]{st97})
that there are precisely $\frac{m!}{\prod_{k=1}^m j_k!\; k^{j_k}}$ such permutations in $S_m$. Each of them  contributes a summand  $\tr(Z^{1})^{j_1}\hdots \tr(Z^{m})^{j_m}$ to $m! \left( P_{\Sym^m} Z^{\otimes m} \right)$. This gives the claimed formula.
\end{proof}

\subsection*{Acknowledgements}

RK is glad to acknowledge inspiring discussions with D.\ Gross and helpful comments from J.\ Aaberg. 

The work of RK is supported by scholarship funds from the State Graduate Funding Program of Baden-W\"urttemberg,
 the Excellence Initiative of the German Federal and State Governments (Grant ZUK 43),
the ARO under contracts, W911NF-14-1-0098 and W911NF-14-1-0133 (Quantum Characterization, Verification, and Validation),
the Freiburg Research Innovation Fund, and the DFG. HR and UT acknowledge funding by the European Research Council through the Starting Grant
StG 258926 (SPALORA). RK and HR would like to thank the {\it Mathematisches Forschungsinstitut Oberwolfach} and the organizers of the 
Oberwolfach workshop {\it Mathematical Physics meets Sparse Recovery} (April 2014, Workshop ID: 1416a), where discussions on the topic of this article have started.


\end{document}